\begin{document}
\title{Mechanism Design for Auctions with Externalities on Budgets\thanks{
  This research was supported by the National Natural Science Foundation of China (NSFC) under grant numbers No. 62172012, No. 12471339 and No. 62302166, 
  and the Key Laboratory of Interdisciplinary Research of Computation and Economics (SUFE), Ministry of Education.
}}

\author{
Yusen Zheng\inst{1}\and
Yukun Cheng\inst{2*}\and
Chenyang Xu\inst{3}\and
Xiaotie Deng\inst{1*}
}

\institute{
School of Computer Science, Peking University, Beijing 100871, China
\and
School of Business, Jiangnan University, Wuxi 214122, China
\and
Software Engineering Institute, East China Normal University, China\\
\email{yusen@stu.pku.edu.cn, ykcheng@amss.ac.cn, cyxu@sei.ecnu.edu.cn, xiaotie@pku.edu.cn}
\\
$^{*}$Corresponding author
}

\maketitle              %

\begin{abstract}

This paper studies mechanism design for auctions with externalities on budgets, a novel setting where the budgets that bidders commit are adjusted due to the externality of the competitors' allocation outcomes---a departure from traditional auctions with fixed budgets.
This setting is motivated by real-world scenarios, for example, participants may increase their budgets in response to competitors' obtained items. 
We initially propose a general framework with homogeneous externalities to capture the interdependence between budget updates and allocation, formalized through a budget response function that links each bidder's effective budget to the amount of items won by others.

The main contribution of this paper is to propose a truthful and individual rational auction mechanism for this novel auction setting, which achieves an approximation ratio of $1/3$ with respect to the liquid welfare.
This mechanism is inspired by the uniform-price auction, in which an appropriate uniform price is selected to allocate items, ensuring the monotonicity of the allocation rule while accounting for budget adjustments.
Additionally, this mechanism guarantees a constant approximation ratio by setting a purchase limit.
Complementing this result, we establish an upper bound: no truthful mechanism can achieve an approximation ratio better than $1/2$.
This work offers a new perspective to study the impact of externalities on auctions, providing an approach to handle budget externalities in multi-agent systems.

  \keywords{Mechanism Design \and Auction \and Budget Constraint \and Externality.}
\end{abstract}

\newpage

\section{Introduction}\label{sec:introduction}

Auctions are a common method for resource allocation. Bidders participating in auctions typically face constraints on the amount of money they can use.
Many previous studies on auctions with budget constraints model the budget as a predetermined fixed parameter~\cite{abrams2006revenue,daskalakis2018revenue,richter2019mechanism,devanur2013prior,dobzinski2014efficiency}. 
Recently, some work has begun to explore scenarios where bidders do not set a hard budget parameter but instead focus on their return on investment (ROI)~\cite{golrezaei2021auction,goel2014clinching,balseiro2021robust}.
However, in real-world economic activities, bidders' budget decisions are often influenced by social signals, such as the behavior of others.
This phenomenon is exemplified by concepts like the ``herding effect'' or ``conspicuous consumption'' commonly studied in behavioral economics. 
In these cases, these budget constraints are not rigid but rather flexible, meaning that bidders' internal expectations of their budget ceilings continuously adjust throughout the auction process and its outcomes. Auction mechanisms designed for fixed budget scenarios are not suitable for situations with such externalities on the budget. This paper formalizes the problem considering the specific flexible budget constraint for the first time, in which a bidder's budget is influenced by the final resource allocations obtained by others. This scenario is quite common. For example, bidders exhibiting strong herding behavior may psychologically increase their budget ceilings upon observing that other bidders receive substantial resource allocations.

We formally refer to the impact of allocation outcomes on bidders' budgets as \emph{the allocation(-induced) externalities on budgets}. 
Previous works on externalities primarily focus on the effects of item allocations or the private information held by bidders on other bidders' valuations, such as~\cite{agarwal2024towards,belloni2017mechanism,krysta2010combinatorial,jehiel2005allocative,jehiel1996not}. 
However, our study examines the impact of item allocation on bidders' budgets, representing a new category of externality problems.

The presence of allocation externalities on budgets means that the allocation of items is no longer an isolated decision for each bidder, rather, it simultaneously affects the utilities and budgets of all bidders, including the bidder herself. This introduces new challenges in designing effective auction mechanisms. 
Our goal, therefore, is to develop an auction mechanism that not only provides robust incentive guarantees but also maximizes efficiency.

\subsection{Our Contributions}

We formalize the allocation externalities on budgets by introducing a budget impact factor, which captures these externalities. We consider a scenario with homogeneous externalities, meaning that for every bidder, the externality effect of other bidders' allocations on their budget is identical. We use liquid welfare as the metric for measuring auction efficiency, as it is a commonly used welfare measure in budget-constrained auctions~\cite{dobzinski2014efficiency}. Optimizing liquid welfare involves allocating items to bidders with high valuations while ensuring they possess sufficient purchasing power.

Compared to traditional welfare-maximizing auction mechanisms, the presence of allocation externalities on budgets introduces new challenges. 
Optimizing liquid welfare requires balancing valuation and budget, which can be in conflict. 
For instance, allocating more items to one bidder increases her value, but simultaneously reduces the number of items available to others, which in turn diminishes that bidder's budget due to the externalities.
This conflict, combined with incentive constraints, necessitates novel technical approaches in mechanism design.
Overall, our contributions are summarized as follows:

\begin{itemize}[left=1em]
  \item We propose a model to capture allocation-induced externalities on budgets in auctions. 
  In this model, a bidder's budget is represented as a function of the allocations received by other bidders, allowing us to characterize the impact of homogeneous allocation externalities on the budget.\vspace{0.5em}
  \item We provide an optimal allocation algorithm that solves the liquid welfare maximization problem (\cref{thm:opt-allocation}).
  The algorithm greedily allocates items in decreasing order of valuation and assigns any remaining items to the bidder whose budget is least affected by externalities (\cref{alg:allocation}).
  \vspace{0.5em}
  \item We design a truthful and individually rational auction mechanism (\cref{mech:approx-lw}) that achieves a $1/3$-approximation for liquid welfare (\cref{thm:approx-lw}). Furthermore, we prove an impossibility result showing that no truthful mechanism can achieve a better approximation ratio than $1/2$ (\cref{sol:approx-lw:upper-bound}).
\end{itemize}

\subsection{Related Work}

\paragraph{Mechanism Design with Budget Constraints.}

Budget constraints are a natural feature of economic activities, limiting the amount of money participants can spend. Many studies have focused on designing auction mechanisms that account for these constraints~\cite{abrams2006revenue,daskalakis2018revenue,richter2019mechanism,devanur2013prior}.
The most common approach models budget constraints as hard constraints, treating a bidder's budget as a fixed parameter, and uses budgeted quasi-linear utility functions to characterize bidders' utilities. This approach has been widely applied in revenue-maximizing auctions~\cite{abrams2006revenue,daskalakis2018revenue}, welfare-maximizing auctions~\cite{dobzinski2012multi,dobzinski2014efficiency,lu2015improved}, multi-item auctions~\cite{lu2017liquid}, and auctions in auto-bidding systems~\cite{balseiro2022optimal}.
This model is reasonable in many scenarios, such as certain resource allocation settings where bidders' budgets remain fixed.
However, in many real-world scenarios, bidders' budgets are not rigid but rather flexible, adjusting throughout the auction process and its outcomes.
Only a few studies have explored auction mechanism design under such conditions. For instance, Goel et al.~\cite{goel2014clinching} model the budget as a function of the items acquired by the bidder, but do not consider the impact of other bidders' allocations on the budget. 
The study of auctions with externalities on budget remains a novel area of research.

\paragraph{Efficiency Metrics in Auctions with Budget Constraints.}

In their pioneering and comprehensive study of auctions with budget constraints, Dobzinski et al.~\cite{dobzinski2012multi} use \emph{Pareto efficiency} as a measure of mechanism performance. They introduced the adaptive clinching auction mechanism and proved that it is the only mechanism that simultaneously satisfies truthfulness, individual rationality, and Pareto optimality. Subsequent works on auctions with budget constraints have also frequently relied on Pareto efficiency as a measure of efficiency~\cite{fiat2011single,hafalir2012near,goel2014clinching}. However, since Pareto efficiency is a binary criterion, it does not lend itself to the development of approximation algorithms.
Another commonly used metric for measuring auction efficiency, \emph{social welfare}, has been shown to achieve no better than a $1/n$-approximation under budget constraints for any truthful and individually rational mechanism~\cite{dobzinski2014efficiency}. Recognizing these limitations, Dobzinski et al.~\cite{dobzinski2014efficiency} introduced a more quantifiable welfare metric, \emph{liquid welfare}, which innovatively incorporates both agents' valuations of items and their purchasing power.
They further leveraged this metric to design a mechanism that achieves a $1/2$-approximation ratio. Building on this work, Lu and Xiao~\cite{lu2015improved} improved the approximation ratio to $\frac{2}{1 + \sqrt{5}}$. Many subsequent studies have used liquid welfare as a metric to evaluate the efficiency of mechanisms under budget constraints~\cite{fikioris2023liquid,fotakis2019bridge,azar2017liquid}.

\paragraph{Auctions with Externalities.}

The study of externalities in economics has a long history. In 1920, Pigou proposed using taxes to correct market failures caused by externalities~\cite{pigou2017economics}. In 1960, Coase introduced the famous ``Coase Theorem'', emphasizing that, in the absence of transaction costs, parties can resolve externality issues through negotiation without relying on government intervention~\cite{coase2013problem}. 
Externality problems have also been extensively studied in the field of algorithmic mechanism design, particularly in the context of auction mechanisms.
In the interesting work by Jehiel et al.~\cite{jehiel1996not}, they discuss how to design auction mechanisms to maximize the auctioneer's revenue when the allocation of items negatively impacts the utilities of other bidders. This is one of the earliest studies analyzing \emph{allocation externalities}.
The survey by Jehiel et al.~\cite{jehiel2005allocative} provides a comprehensive analysis of various novel mechanism design problems arising from the presence of externalities, including allocation externalities and \emph{information externalities}. 
Krysta et al.~\cite{krysta2010combinatorial} investigated externality issues in multi-item combinatorial auctions using a bidding language framework.
Agarwal et al.~\cite{agarwal2024towards} studied the impact of additive externalities in data auctions.
In addition to analyzing the impact of externalities on auctions themselves, there is also work that explores how to design the structures of these externality effects. Belloni et al.~\cite{belloni2017mechanism} investigated the novel problem of how monopolistic sellers can maximize their revenue by designing network structures that capture the externality effects among buyers in the market.
However, to date, no work has explored the impact of allocation-induced externalities on budgets.

\subsection{Paper Organization}
In \cref{sec:preliminaries}, we introduce a model characterizing the impact of externalities on budgets.
In \cref{sec:optimal}, we present an allocation algorithm maximizing the liquid welfare and prove its optimality.
In \cref{sec:mechanism}, we design a truthful and individually rational auction mechanism that achieves a $1/3$-approximation for the liquid welfare.
Finally, we conclude this paper in \cref{sec:conclusion}.

\section{Preliminaries}\label{sec:preliminaries}

Let us consider a sealed-bid auction for one divisible item. 
There are $n$ bidders indexed by $i \in \{1, 2, \ldots, n\}$, and we assume that $n \ge 2$, as with fewer bidders (i.e., $n=1$), there are no externalities. Each bidder $i$
has two parameters: a private \emph{valuation} $v_i$ for one unit of the item, known only to bidder $i$;
and a public \emph{budget impact factor} $\alpha_i\in (0,+\infty)$\footnote{As we will explain shortly, if $\alpha_i = 0$, this implies that bidder $i$'s budget is zero. Consequently, we can safely exclude this bidder from consideration.}, which quantifies how bidder $i$'s budget is influenced by the allocations to other bidders. (This will be formalized later.)
An \emph{auction mechanism} $\cM$ is an algorithm that takes the bidders' reported valuations, denoted by ${\bf b}=(b_1,\cdots,b_n)$, and the public budget impact factors $(\alpha_1,\cdots,\alpha_n)$ as input and outputs a tuple $({\bf x},{\bf p})$. To be specific, ${\bf x}=(x_1,\cdots,x_n)$ is the allocation of item, where $x_i\in [0,1]$, denotes the amount of the item allocated to bidder $i$, satisfying $\sum_{i}x_i\leq 1$.
Payment ${\bf p}=(p_1,\cdots,p_n)$, where $p_i$ is the amount bidder $i$ shall pay. 
We slightly abuse notation by using the functions $x_i(\vecb)$ and $p_i(\vecb)$ to denote the fraction of item allocated to bidder $i$ and the corresponding payment, respectively, by mechanism $\cM$ when the input bid profile is $\vecb$.
For the sake of convenience, the bid profile ${\bf b}$ and the valuation profile ${\bf v}$ can be equivalently denoted as ${\bf b}=(b_i,{\bf b}_{-i})$ and ${\bf v}=(v_i,{\bf v}_{-i})$.

Unlike the traditional auction setting with hard budget constraints, we are more interested in a scenario, where bidders' budgets are subject to allocation-induced externalities. 
Specifically, a bidder's budget is influenced linearly by the allocations received by other bidders. 
Thus, given the item allocation $\vecx = (x_1,\ldots, x_n)=(x_i,\vecx_{-i})$, the \emph{budget} of bidder $i$ is determined as $B_i(\vecx_{-i}) = \alpha_i \cdot \sum_{j \in [n] - i}  x_j$, 
in which, each bidder $i$ experiences a homogeneous impact from the allocations of other bidders, as captured by the parameter $\alpha_i$.   

We employ a budgeted quasi-linear utility function to describe a bidder's utility. Given a reported bidding profile $\vecb$, and thus bidder $i$ obtains her item allocation $x_i(\vecb)$ with payment $p_i(\vecb)$ from a mechanism. Then the utility of bidder $i$ is defined as
\begin{equation*}
  u_i(\vecb;v_i) = \begin{cases}
    v_ix_i(\vecb)-p_i(\vecb), &p_i(\vecb)\le B_i(\vecx_{-i}(\vecb)) \\
    -\infty, &\text{otherwise}
  \end{cases}.
\end{equation*}

A mechanism is truthful if, for each bidder reporting their true valuations is always a dominant strategy. This essential property in mechanism design ensures that a utility maximizer will always report their true valuation to the mechanism.
\begin{definition}[Truthfulness]
    An auction mechanism $\cM$ is said to be \emph{truthful}, if $
    u_i(v_i,\vecb_{-i};v_i)\geq u_i(b_i,\vecb_{-i};v_i)$ for any $i\in [n]$, $v_i$, $b_i$ and $\vecb_{-i}$.
\end{definition}

A mechanism is individually rational (IR) if, for each bidder, bidding truthfully results in non-negative utility, regardless of how other bidders bid.

\begin{definition}[Individual Rationality]
  An auction mechanism is \emph{individually rational} if $u_i(v_i,\vecb_{-i};v_i)\geq 0$, for any $i\in [n]$, $v_i$ and $\vecb_{-i}$.
\end{definition}

Inspired by \cite{dobzinski2014efficiency}, we employ \emph{liquid welfare} as the welfare measure, to evaluate the efficiency of our designed auction mechanism.
\begin{definition}[Liquid Welfare]
    In the auction with externalities on budgets, the liquid welfare, associated an allocation $\vecx$, is defined as
    \begin{equation*}
  \lw(\vecx)=\sum_{i\in[n]}\min\{v_ix_i,B_i(\vecx_{-i})\}.
\end{equation*}
\end{definition}
Optimizing liquid welfare reflects the idea of allocating the item to bidders who value it highly and have sufficient purchasing power.

For any instance $I=\{v_i,\alpha_i\}_{i \in [n]}$, let $\opt(I)$ denote the optimal liquid welfare achievable by any optimal allocation.
We say that a truthful auction mechanism is a \emph{$\rho$-approximation} with respect to the liquid welfare if, for any instance $I=\{v_i,\alpha_i\}_{i \in [n]}$, the liquid welfare achieved by the mechanism is at least $\rho \cdot \opt(I)$, where $\rho \in (0,1]$ is the \emph{approximation ratio}.

Our goal is to design a truthful and individually rational auction mechanism that achieves a good approximation ratio with respect to the liquid welfare.

\section{Optimal Allocation for Maximizing Liquid Welfare}\label{sec:optimal}

This section focuses on the design of an allocation algorithm that takes into account the externalities on budgets in order to maximize liquid welfare, while temporarily disregarding the incentive constraints of truthfulness and individual rationality.
We begin with presenting a greedy algorithm (\cref{alg:allocation}) and subsequently prove that it achieves the optimal liquid welfare (\cref{thm:opt-allocation}).

\begin{algorithm}[!ht]
  \caption{Optimal Allocation}
  \label{alg:allocation}
  \KwIn{Number of bidders $n$, valuations $\set{v_i}_{i\in[n]}$ and  budget impact factors $\set{\alpha_i}_{i\in [n]}$ \tcp*{Assume $v_1\ge v_2\ge\cdots\ge v_n$.}} 
  \KwOut{Allocation $\vecx^*=(x_1^*,x_2^*,\cdots,x_n^*)$\;}
  
  $s\gets 1$ \tcp*{$s$ represents the remaining quantity of the item.}
  \For{$i=1..n$}{
    \If{$s\ge \frac{\alpha_i}{v_i+\alpha_i}$}{
      $x_i^*\gets \frac{\alpha_i}{v_i+\alpha_i}$, $s\gets s-x_i^*$ \;
    }\ElseIf{$s \ge 0$}{
      $x_i^*\gets s$, $s\gets 0$ \;
    }
  }
  \If{$s>0$}{
    $\l \gets \argmin_i \alpha_i$, $x_\l^*\gets x_\l^*+s$, $s\gets 0$ \;
  }
  \Return $(x_1^*,x_2^*,\cdots,x_n^*)$ \;
\end{algorithm}

In brief, \cref{alg:allocation} first allocates to each bidder $i$ a fraction of the item equal to $\min\set*{\frac{\alpha_i}{v_i+\alpha_i}, s}$ in descending order of valuations, where $s$ denotes the remaining fraction of the item. 
After these operations, if there is a 
fraction of item left, then the residual fraction is assigned to bidder $\ell$, whose budget is least affected by externalities, that is $\alpha_{\ell}=\min_{i\in [n]}\alpha_i$.  
Note that \cref{alg:allocation} distributes the entire item. Specially, if one bidder $i$ receives $\frac{\alpha_i}{v_i+\alpha_i}$ of the item, it implies that her budget, given by $B_i(\vecx_{-i})=\alpha_i\cdot\sum_{j\in[n]-i}x_j=\alpha_i\cdot(1-x_i)$, is exactly equal to the value $v_i x_i$.
There are two scenarios for the outcome of \cref{alg:allocation}:
\begin{itemize}
  \item \emph{Case 1:} $\sum_{i\in [n]} \frac{\alpha_i}{v_i+\alpha_i}\ge 1$. Let $r\deq \max\set*{r\in [n] \mid \sum_{i=1}^r \frac{\alpha_i}{v_i+\alpha_i} \le 1}$. Then $x_i^*=\frac{\alpha_i}{v_i+\alpha_i}$ for any $i<r$, $x_{r+1}^*=1-\sum_{i=1}^rx_i^*$ and $x_i^*=0$ for any $i>r+1$.
  \vspace{0.5em}
  \item \emph{Case 2:} $\sum_{i\in [n]} \frac{\alpha_i}{v_i+\alpha_i}<1$. Let $\l\deq \argmin_i \alpha_i$. Then $x_i^*=\frac{\alpha_i}{v_i+\alpha_i}$ for any $i\neq \l$, and $x_{\l}^*=1-\sum_{i\neq \l}x_i^*$.
\end{itemize}

\begin{theorem}\label{thm:opt-allocation}
  The allocation returned by \cref{alg:allocation} is an optimal solution for maximizing the liquid welfare.
\end{theorem}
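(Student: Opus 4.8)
The plan is to recast the objective as a separable concave function over the simplex and then pin down its maximizer by an exchange (water-filling) argument. The first step I would take is to show that some optimal allocation uses the entire item, i.e. $\sum_i x_i = 1$. Given any feasible $\vecx$ with $S := \sum_i x_i < 1$, I move the whole surplus $1-S$ onto a single arbitrary bidder $k$: the term $v_k x_k$ then strictly increases while $B_k = \alpha_k(S-x_k)$ is unchanged (the quantity $S-x_k$ is preserved), and for every $j\neq k$ the value $v_j x_j$ is unchanged while $B_j=\alpha_j(S-x_j)$ strictly increases. Hence every summand $\min\{v_jx_j,B_j\}$ is non-decreasing and liquid welfare does not drop. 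This lets me fix $S=1$, so that $B_i=\alpha_i(1-x_i)$ and
\[
  \lw(\vecx)=\sum_{i\in[n]} g_i(x_i),\qquad g_i(t):=\min\{v_i t,\ \alpha_i(1-t)\}.
\]
I emphasize that this reduction is essential: only once $S=1$ is fixed does $B_i$ reduce to $\alpha_i(1-x_i)$, which makes the per-bidder peak land exactly at $\frac{\alpha_i}{v_i+\alpha_i}$.

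Next I would record the shape of each $g_i$. It is concave and piecewise linear, rising with slope $v_i$ on $[0,\theta_i]$ and falling with slope $-\alpha_i$ on $[\theta_i,1]$, where $\theta_i=\frac{\alpha_i}{v_i+\alpha_i}$ is the unique point at which $v_i t=\alpha_i(1-t)$, attaining peak value $\frac{v_i\alpha_i}{v_i+\alpha_i}$. Therefore maximizing $\sum_i g_i(x_i)$ over $\{\vecx:\sum_i x_i=1,\ x_i\ge 0\}$ is a separable concave program, so it suffices to exhibit a point admitting no improving feasible transfer; concavity over the convex feasible region then upgrades such local optimality to global optimality.

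The core is then an exchange argument on marginal returns. Below its peak each $g_i$ has constant marginal $v_i$, and $v_1\ge\cdots\ge v_n$, which yields two facts: (i) whenever a higher-value bidder sits strictly below its peak while some lower-value bidder holds positive mass, transferring mass to the higher-value bidder does not decrease the objective; and (ii) whenever any bidder is pushed beyond its peak (marginal $-\alpha_i$) while another bidder remains below its peak (marginal $v_j\ge 0$), transferring mass from the former to the latter strictly improves the objective. In Case 1 ($\sum_i\theta_i\ge 1$) fact (ii) forces no bidder past its peak, and fact (i) forces the item to be poured into the highest-valuation bidders first; this is exactly $\vecx^*$, with bidders $1,\dots,r$ at their peaks, bidder $r+1$ partially filled, and the rest at zero. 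In Case 2 ($\sum_i\theta_i<1$) fact (ii) forces every bidder to its peak, leaving a surplus $1-\sum_i\theta_i$ that must overflow; since overflowing bidder $i$ costs $\alpha_i$ per unit and these per-unit costs are constant, the cheapest placement puts the entire surplus on the bidder $\ell$ with smallest $\alpha_\ell$, again matching $\vecx^*$.

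The step I expect to be the main obstacle is making the exchange argument rigorous at the kinks, where $g_i$ is non-differentiable. I would phrase every improvement as a finite mass transfer evaluated through one-sided slopes rather than derivatives, and lean on the concavity of the total objective to conclude that the transfer-stable point $\vecx^*$ is globally optimal. A secondary subtlety is Case 2, where I must rule out both spreading the surplus across several bidders and leaving the item partially unallocated; the first is settled by the linearity of the overflow cost in each $\alpha_i$, and the second by the full-allocation reduction established in the first step.
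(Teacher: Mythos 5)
Your proposal is correct, but its logical architecture differs from the paper's. You share the first step verbatim: the paper's property (P1) is established by exactly your surplus-dumping argument (give the leftover to one bidder; her budget $\alpha_k(S-x_k)$ is unchanged, everyone else's budget grows). After that, the paper works \emph{from an optimal solution toward the algorithm}: it converts an arbitrary optimal full allocation, via welfare-non-decreasing pairwise reallocations, into one satisfying three further structural properties (P2)--(P4), proves that the allocation satisfying (P1)--(P4) is \emph{unique}, and observes that \cref{alg:allocation} outputs it. You work \emph{from the algorithm outward}: once $\sum_i x_i=1$ is fixed, the objective $\sum_i \min\{v_i x_i,\alpha_i(1-x_i)\}$ is a separable piecewise-linear concave function over a simplex face, so it suffices to check that $\vecx^*$ admits no improving pairwise transfer, and concavity upgrades this to global optimality. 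That certification does go through: at $\vecx^*$ every right slope ($-\alpha_i$ at a peak, $v_{r+1}$ below it in Case 1, $-\alpha_\ell$ for the overflow bidder in Case 2) is at most every left slope of a positively allocated bidder, which yields a common multiplier $\lambda$ separating them, and then concavity gives $\sum_i g_i(y_i)-\sum_i g_i(x_i^*)\le \lambda\sum_i(y_i-x_i^*)=0$ for any feasible $\vecy$. The one point you assert rather than prove is precisely this ``transfer-stability implies global optimality'' principle; it is a standard fact for separable concave maximization over a simplex, and you correctly flag the one-sided-slope issue at the kinks as the place needing care. The trade-off: your route avoids the paper's uniqueness lemma and its bookkeeping that each exchange preserves the previously established properties, and it generalizes immediately to any separable concave objective on this polytope; the paper's route stays entirely elementary (no convex-analysis principle is invoked) and yields, as a by-product, that the structured allocation is unique.
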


Due to space constraints, we provide a proof sketch in the following. The detailed proof of \cref{thm:opt-allocation} is available in \cref{sec:proof-optimal-thm}.

\begin{proof}[Proof Sketch]
  We prove this theorem by initially demonstrating that there exists an optimal allocation $\vecx = (x_1, x_2, \cdots, x_n)$ that satisfies the following properties:

  \begin{itemize}[left=2em]
    \item[(P1)] All items are allocated, 
    i.e., $\sum_{i\in[n]}x_i=1$.
    \vspace{0.5em}
    \item[(P2)] No overallocation occurs other than for bidder $\l$: 
    For every bidder $i$ with $i \neq \l$, the allocation satisfies $x_i \le \frac{\alpha_i}{v_i + \alpha_i}$.
    \vspace{0.5em}
    \item[(P3)] Allocation follows the order:
    For any pair of bidders $i$ and $j$ with $1 \leq i < j \leq n$, either $x_i \geq \frac{\alpha_i}{v_i + \alpha_i}$ or $x_j = 0$.
    \vspace{0.5em}
    \item[(P4)] Overallocation for bidder $\ell$ only occurs when the allocations for all other bidders are satisfied:
    For every bidder $i$ with $i\neq \l$, the allocation satisfies either $x_\l \le \frac{\alpha_\l}{v_\l+\alpha_\l}$ or $x_i \ge \frac{\alpha_i}{v_i+\alpha_i}$.
  \end{itemize}

  We then prove that there is only one allocation satisfying (P1)-(P4), which matches exactly with the allocation provided by \cref{alg:allocation}. Therefore, \cref{alg:allocation} is optimal.
\end{proof}

\section{A $1/3$-Approximation Truthful Mechanism}\label{sec:mechanism}

\cref{alg:allocation} provides an optimal allocation for maximizing liquid welfare just from the perspective of algorithm design. However, this allocation rule lacks monotonicity, since the allocation for the bidder with the highest valuation decreases as her valuation increases. This violates truthfulness as per Myerson's Lemma~\cite{myerson1981optimal}, which we have detailed in the appendix (\cref{lem:myerson}).

Fortunately, we can maintain the monotonicity of the allocation rule by modifying the allocation rule, while only incurring a constant loss in the approximation ratio for liquid welfare.
Our mechanism design is inspired by the uniform-price auction in the work of \cite{dobzinski2014efficiency}.
However, the external impact of allocation on budgets makes our problem setting more complex. To achieve an ideal approximation ratio while ensuring the desirable properties of truthfulness and individual rationality, we need to explore new mechanism design techniques. Our main result is presented in \cref{thm:approx-lw}.

\begin{theorem}\label{thm:approx-lw}
 For the auction with externalities on budgets, there exists a mechanism that ensures truthfulness, individual rationality, and achieves an approximation ratio of $1/3$ with respect to liquid welfare.
\end{theorem}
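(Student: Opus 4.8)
The plan is to retain the structural insight behind \cref{alg:allocation} --- push each served bidder toward her balance point $x_i=\frac{\alpha_i}{v_i+\alpha_i}$, at which $v_ix_i=B_i(\vecx_{-i})=\frac{v_i\alpha_i}{v_i+\alpha_i}$ --- while replacing the non-monotone greedy rule by a \emph{uniform-price} rule that every bidder perceives as a posted price. Concretely, I would sort the bids in decreasing order, fix a single per-unit price $p$ computed independently of any individual bidder's own report (so bidder $i$ faces a take-it-or-leave-it price), and let each bidder with $b_i\geq p$ buy as much of the item as the price and her budget permit, subject to a fixed \emph{purchase limit} $L$. Payments are the linear $p_i=p\cdot x_i$. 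The purchase limit does double duty: it decouples the circular dependence between $x_i$ and the externality-driven budget $B_i=\alpha_i(1-x_i)$, and it guarantees $B_i\geq\alpha_i(1-L)$ for every winner, i.e.\ a floor on purchasing power that the payment is required to respect.

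For truthfulness and individual rationality I would invoke Myerson's characterization (\cref{lem:myerson}): the setting is single-parameter (only $v_i$ is private), so it suffices to show $x_i(b_i,\vecb_{-i})$ is monotone non-decreasing in $b_i$ and to charge the associated threshold payment. Monotonicity is exactly where the posted-price design pays off: since $p$ is independent of $b_i$ and the cap $L$ breaks the budget's self-reference, raising $b_i$ can only turn a zero share into a positive one (or enlarge an affordable share), never shrink an existing one --- in sharp contrast to the offending decreasing term $\frac{\alpha_i}{v_i+\alpha_i}$ of the optimal rule. Individual rationality then follows from the quasi-linear utility together with $p\leq v_i$ for winners and, crucially, $p\cdot x_i\leq B_i$; the latter is precisely what the floor $B_i\geq\alpha_i(1-L)$ secures, and I would verify that $p$ and $L$ are chosen so this affordability inequality holds for every served bidder.

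The heart of the argument, and the step I expect to dominate the difficulty, is the approximation guarantee. I would lower-bound the mechanism's liquid welfare against $\opt(I)$ using the explicit description of the optimum from \cref{thm:opt-allocation}, separating its two regimes: Case~1, where $\sum_i\frac{\alpha_i}{v_i+\alpha_i}\geq 1$ and $\opt$ equals $\sum_{i\leq r}\frac{v_i\alpha_i}{v_i+\alpha_i}$ plus the partial contribution of bidder $r+1$; and Case~2, where $\sum_i\frac{\alpha_i}{v_i+\alpha_i}<1$ and the residual mass is dumped on the least-affected bidder $\ell=\argmin_i\alpha_i$, contributing $\alpha_\ell\sum_{j\neq\ell}\frac{\alpha_j}{v_j+\alpha_j}$ on top of the balance terms of the others. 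In each regime I would exhibit a choice of $p$ and $L$ under which the mechanism recovers at least a third of the corresponding optimal value.

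The genuine obstacle, absent from the fixed-budget analysis of \cite{dobzinski2014efficiency}, is that liquid welfare is \emph{not separable} across bidders under externalities: truncating a high-value bidder at the purchase limit simultaneously lowers her own value contribution $v_ix_i$ and \emph{raises} the budgets $B_k=\alpha_k\sum_{j\neq k}x_j$ of all other bidders, so every $\min\{v_ix_i,B_i\}$ term must be re-evaluated under the mechanism's own allocation rather than the optimum's. I would tame this coupling by repeatedly applying the balance-point identity $v_ix_i=\alpha_i(1-x_i)$ together with a charging argument that attributes each optimal bidder's contribution either to a winner at the mechanism's price or to the budget slack the cap creates for the remaining bidders. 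I anticipate that the two essentially independent losses --- a factor from posting a single price, as in the classical $1/2$ bound, and a further factor from the purchase limit interacting with the externality --- compound to the claimed $1/3$, and that a tailored tight instance explains why the analysis stops short of the $1/2$ upper bound of \cref{sol:approx-lw:upper-bound}.
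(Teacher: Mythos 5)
Your blueprint coincides with the paper's own: \cref{mech:approx-lw} is literally named ``Uniform Price Auction with Purchase Limit,'' with cap $1/2$ and Myerson payments, and its analysis follows exactly your three-step outline (monotonicity $\Rightarrow$ truthfulness, budget feasibility $\Rightarrow$ IR, then a welfare comparison). The genuine gap is that your proposal stops precisely where the proof has to do work: it never specifies how the price and the limit are computed, and every property you invoke depends on that choice. The paper sets $L=1/2$, defines the division point $k$ as the largest $\ell$ with $\sum_{i\le \ell}\min\{\frac{\alpha_i}{v_\ell+\alpha_i},\frac{1}{2}\}\le 1$, takes $q$ to be a root of the \emph{market-clearing} equation $\sum_{i\le k}\min\{\frac{\alpha_i}{q+\alpha_i},\frac{1}{2}\}=1$, charges the effective price $P=\max\{q,v_{k+1}\}$, and gives the leftover fraction to bidder $k+1$. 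Clearing is not a technicality here: since $B_i=\alpha_i\sum_{j\ne i}x_j$, any unsold fraction depresses \emph{every} budget, so the welfare analysis needs the whole item sold; full allocation is also what resolves the ``circular dependence'' you mention, via $B_i=\alpha_i(1-x_i)$ and the fixed point $x_i=\frac{\alpha_i}{P+\alpha_i}$ --- the cap does not decouple anything (its role is only in the approximation ratio). Moreover, a single price genuinely independent of all reports, as you literally propose, cannot clear the market or give any constant approximation; once the price depends on the bids (through $k$ and $v_{k+1}$), your one-line monotonicity argument no longer applies: the delicate case is a bid crossing the division point (becoming, or ceasing to be, the residual bidder $k+1$), and the paper's proof of \cref{sol:approx-lw:monotonic} is a multi-case boundary analysis of exactly that event.

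The approximation step is likewise a hope rather than an argument, and your guess about its structure (a posted-price loss ``compounding'' with a cap loss) is not how the bound arises. The paper first proves $\opt \le \sum_{i\le k}\frac{v_i\alpha_i}{v_i+\alpha_i}+v_{k+1}$, then uses two facts tied to the specific choice of $P$ and $L=1/2$: (i) for a bidder whose cap is slack, slackness forces $P\ge \alpha_i$, hence $\frac{v_i\alpha_i}{v_i+\alpha_i}\le 2\cdot\frac{P\alpha_i}{P+\alpha_i}=2\cdot\alpha_i(1-x_i)$, i.e.\ at most twice that bidder's contribution to $\alg$, while a capped bidder retains $\frac{1}{2}\min\{v_i,\alpha_i\}$, at least half of her term in the $\opt$ bound; and (ii) $\alg\ge\max\{q,v_{k+1}\}$, which absorbs the additive $v_{k+1}$ term. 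Together these give $\opt\le 2\cdot\alg+\alg=3\cdot\alg$ --- an additive, not multiplicative, combination. Without pinning down $P$ and $L$ so that (i) and (ii) hold, the charging argument you sketch has nothing concrete to charge against; as it stands, the proposal is an accurate restatement of the paper's plan rather than a proof of \cref{thm:approx-lw}.
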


\subsection{Mechanism Design}

We present the auction mechanism (\cref{mech:approx-lw}) and design technique in detail.

\begin{mechanism}[Uniform Price Auction with Purchase Limit]\label{mech:approx-lw}
  Consider $n$ bidders with valuations $v_1 \ge v_2 \ge \cdots \ge v_n$ and budget impact factors $\{\alpha_1, \cdots, \alpha_n\}$. \cref{mech:approx-lw} proceeds as follows: 
  First, add a dummy bidder $n+1$ with valuation $v_{n+1} = 0$ and budget impact factor $\alpha_{n+1} > 0$. 
  Next, define 
  $$k \deq \max\set*{\l \in [n] \mid \sum_{i=1}^\l \min\left\{\frac{\alpha_i}{v_\l + \alpha_i}, \frac{1}{2}\right\} \le 1},$$ referred to as the \emph{division point}, and let $q$ denote the smallest\footnote{There may be multiple roots, and choosing any of them is acceptable, as the choice does not affect the value of $\min\left\{\frac{\alpha_i}{q + \alpha_i}, \frac{1}{2}\right\}$. For simplicity, we select the smallest one.} non-negative root of the equation $\sum_{i=1}^k \min\left\{\frac{\alpha_i}{q + \alpha_i}, \frac{1}{2}\right\} = 1$, referred to as the \emph{unifrom price}.
  The allocation is determined based on the following two cases:

  \begin{itemize}
    \item If $q > v_{k+1}$, allocate $x_i=\min\set*{\frac{\alpha_i}{q + \alpha_i}, \frac{1}{2}}$ to each bidder $i \le k$, and allocate 0 to the remaining bidders.
    \vspace{0.5em}
    \item If $q \le v_{k+1}$, allocate $x_i=\min\set*{\frac{\alpha_i}{v_{k+1} + \alpha_i}, \frac{1}{2}}$ to each bidder $i \le k$, allocate $x_{k+1}=1 - \sum_{j \in [k]} x_j$ to bidder $k+1$, and allocate 0 to the remaining bidders.
  \end{itemize}
 Payment $p_i$ for each bidder $i$ is calculated according to Myerson's Lemma (\cref{lem:myerson}):
  \begin{equation*}
    p_i = v_i x_i(v_i,\vecv_{-i}) - \int_{0}^{v_i} x_i(z,\vecv_{-i}) \dif z,
  \end{equation*}
  where $x_i(\vecv)$ denotes the fraction of item allocated to bidder $i$ by Mechanism 1, if the reported information is $\vecv$.
  
\end{mechanism}

The dummy bidder $n+1$ added in \cref{mech:approx-lw} can help us handle the case where $k=n$.
As shown in \cref{sol:approx-lw:dummy}, adding this bidder does not affect the allocation for the ``actual'' bidders $\set*{1,2,\ldots,n}$, nor does it impact the liquid welfare objective. This is because the item assigned to the dummy bidder is $x_{n+1}=0$. Consequently, her contribution to liquid welfare is zero, and she also cannot influence the contributions of other bidders through externalities.
Due to space limitations, we defer the proof to \cref{proof:approx-lw:dummy}.

\begin{lemma}\label{sol:approx-lw:dummy}
  In \cref{mech:approx-lw}, the allocation to dummy bidder is zero, i.e., $x_{n+1}=0$, and the division point $k$ satisfies $k \leq n$.
\end{lemma}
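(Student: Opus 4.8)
The statement bundles two claims: the division point obeys $k \le n$, and the dummy bidder is allocated nothing, $x_{n+1}=0$. I would establish them in this order, since the allocation claim leans on the bound for $k$.

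For the bound $k \le n$, the plan is to verify that the dummy index $\ell = n+1$ never qualifies as a candidate division point. Evaluating the defining sum at $\ell = n+1$ uses $v_{n+1}=0$, so every summand collapses to $\min\{\alpha_i/(0+\alpha_i),\,1/2\} = \min\{1,1/2\} = 1/2$, giving a total of $(n+1)/2$. Since $n \ge 2$ by assumption, $(n+1)/2 \ge 3/2 > 1$, so the $\le 1$ requirement is violated and $n+1$ is excluded; hence $k \le n$. The same style of estimate also shows the candidate set is nonempty and in fact $k \ge 2$, because at $\ell = 2$ the sum is at most $1/2 + 1/2 = 1$; this also guarantees the uniform price $q$ is well defined, since the aggregate demand at price $0$ is $k/2 \ge 1$.

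For $x_{n+1}=0$ I would argue by the position of the dummy relative to the allocated block $\{1,\dots,k\}$ together with the two branches of the allocation rule. When $k < n$ the dummy index $n+1$ strictly exceeds $k+1$, so in either branch it lies beyond every bidder receiving a positive share and is set to $0$ by definition. The only genuinely delicate configuration is $k = n$, where the dummy coincides with bidder $k+1$ and could a priori absorb the residual $1 - \sum_{j \le k} x_j$ in the second branch.

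The main obstacle is exactly this case $k=n$ under the second branch $q \le v_{k+1}$. Here $v_{k+1}=v_{n+1}=0$, and since $q$ is a nonnegative root this forces $q=0$. To pin down when $q=0$ can arise I would use the aggregate demand $f(q) = \sum_{i=1}^{k}\min\{\alpha_i/(q+\alpha_i),\,1/2\}$, which is continuous and non-increasing with $f(0)=k/2$; a root of $f(q)=1$ at $q=0$ forces $k/2 = 1$, i.e. $n = 2$ under $k=n$. In that case each allocated bidder receives $\min\{\alpha_i/(0+\alpha_i),1/2\} = 1/2$, so $\sum_{j \le k} x_j = 2\cdot(1/2)=1$ and the residual handed to the dummy is $x_{n+1}=1-1=0$. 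Otherwise, when $n>2$ we have $f(0)=n/2 > 1$, so the smallest root satisfies $q > 0 = v_{k+1}$, placing us in the first branch where the dummy again gets $0$. Combining the subcases exhausts all possibilities and yields $x_{n+1}=0$.
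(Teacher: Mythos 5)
Your proof is correct and takes essentially the same route as the paper's: both evaluate the defining sum at the dummy index using $v_{n+1}=0$ (each term equals $\min\{1,1/2\}=1/2$, so the sum is $(n+1)/2 > 1$ since $n \ge 2$) to conclude $k \le n$, and both isolate the $k=n$ case, showing the second branch forces $q=0$ and hence $n=2$ with zero residual for the dummy, while $n>2$ forces $q>0$ so the first branch assigns the dummy nothing. Your extra verification that $q$ is well defined (via continuity and monotonicity of the aggregate demand, using $k \ge 2$) is a small addition the paper leaves implicit.
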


Due to \cref{sol:approx-lw:dummy}, it is clear that \cref{mech:approx-lw} allocates the entire item to bidder $\set*{1,2,\cdots,n}$. Therefore, $B_i(\vecx_{-i})= \alpha_i \cdot (1 - x_i)$ for all $i \in [n]$.
For the case of $q \le v_{k+1}$, bidder $k+1$ is allocated 
$x_{k+1}=1-\sum_{i\in [k]} \min\set*{\frac{\alpha_i}{v_{k+1}+\alpha_i},\frac{1}{2}}\ge 1-\sum_{i\in[k]} \min\set*{\frac{\alpha_i}{q+\alpha_i},\frac{1}{2}}=0$.
It implies that the allocation for bidder $k+1$ is non-negative. 
Moreover, by the definition of $k$, we have 
$\sum_{i\in[k+1]} \min\set*{\frac{\alpha_i}{v_{k+1}+\alpha_i},\frac{1}{2}} > 1$.
Therefore, 
\begin{equation}\label{eq:approx-lw:xkplus1}
  0\le x_{k+1} < \min\set*{\frac{\alpha_{k+1}}{v_{k+1}+\alpha_{k+1}},\frac{1}{2}}
\end{equation}

\paragraph{The Role of the Purchase Limit.}
\cref{mech:approx-lw} ensures that each bidder receives no more than \(1/2\) of the item, which can be viewed as a form of \emph{purchase limit}.
The purchase limit is essential for the mechanism to achieve a good approximation ratio. The high-level intuition is as follows: Consider a bidder with a budget impact factor \( \alpha \) and a valuation \( v = \alpha^2 \), where \( \alpha \) is sufficiently large for this bidder to rank first in the valuation order. The uniform price is \( q \). 
Without the purchase limit, this bidder would receive \( x = \frac{\alpha}{q + \alpha} \), and her contribution to liquid welfare would be \( \Gamma \deq \alpha \cdot (1 - x) = \frac{q \alpha}{q + \alpha} \). However, in the optimal allocation (\cref{alg:allocation}), the bidder receives \( x^* = \frac{\alpha}{v + \alpha} = \frac{1}{\alpha + 1} \). Her contribution to optimal liquid welfare is \( \Gamma^* \deq \alpha \cdot (1 - x^*) = \frac{\alpha^2}{\alpha + 1} \). 
As \( \alpha \to \infty \), \( \Gamma \to q \), but \( \Gamma^* \to \infty \), leading to an unbounded welfare gap. We will prove later that by imposing a purchase limit of $1/2$, the welfare gap can be bounded by a constant factor (see the proof of \cref{sol:approx-lw:approx}).

\subsection{Theoretical Guarantees of the Mechanism}

To prove that \cref{mech:approx-lw} satisfies the desired attributes in \cref{thm:approx-lw}, we first demonstrate that the allocation rule satisfies the \emph{monotonicity}, which ensures the 
truthfulness of \cref{mech:approx-lw} by Myerson's Lemma, and establish the property of \emph{budget feasibility}  to ensure that the payments returned by \cref{mech:approx-lw} are all subject to the budgets.
Due to space limits, the proofs of \cref{sol:approx-lw:monotonic} and \cref{sol:approx-lw:budget} are left in \cref{proof:approx-lw:monotonic} and \cref{proof:approx-lw:budget}, respectively.

\begin{lemma}[Monotonicity]\label{sol:approx-lw:monotonic}
  The allocation function of \cref{mech:approx-lw} is non-decreasing in the valuation of each bidder.
  Specifically, for any bidder $j$, $x_j(v_j, \vecv_{-j})$ is non-decreasing in $v_j$ for any $\vecv_{-j}$.
\end{lemma}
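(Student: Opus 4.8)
The plan is to recast \cref{mech:approx-lw} as a \emph{uniform clearing-price} allocation and then track bidder $j$'s share as a one-parameter function of $v_j$. Write $d_i(p)\deq\min\set*{\frac{\alpha_i}{p+\alpha_i},\frac12}$ for the demand of bidder $i$ at price $p$, which is non-increasing in $p$, and for $p\ge 0$ let $F(p)\deq\sum_{i:\,v_i>p}d_i(p)$ be the total demand of the bidders willing to pay more than $p$; since both the summands and the index set shrink as $p$ grows, $F$ is non-increasing. My first step is a bridge lemma: the clearing price $P^{*}\deq\inf\set*{p\ge 0\mid F(p)\le 1}$ coincides with $\max\set*{q,v_{k+1}}$, and \cref{mech:approx-lw} gives each strict winner (a bidder with $v_i>P^{*}$) exactly $d_i(P^{*})$, while the single marginal bidder with $v_i=P^{*}$ is rationed to fill the unit supply. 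This is verified by matching the defining conditions: in the case $q>v_{k+1}$ one checks $F(q)=\sum_{i\le k}d_i(q)=1$, whereas in the case $q\le v_{k+1}$ the maximality of $k$ forces $F(v_{k+1}^{-})>1\ge F(v_{k+1})$, so $P^{*}=v_{k+1}$ and bidder $k+1$ receives the remainder $1-\sum_{i\le k}d_i(v_{k+1})$.

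Next I fix the bidder $j$ and the profile $\vecv_{-j}$, and regard $x_j$ as a function of $t\deq v_j$. Let $F_{-j}(p)\deq\sum_{i\ne j:\,v_i>p}d_i(p)$ be the demand excluding $j$, and define two thresholds $P_0\deq\inf\set*{p\mid F_{-j}(p)\le 1}$ and $P_1\deq\inf\set*{p\mid F_{-j}(p)+d_j(p)\le 1}$; since $F_{-j}+d_j\ge F_{-j}$ we have $P_0\le P_1$. The claim is that, as $t$ increases, bidder $j$ passes through three consecutive regimes determined by these thresholds: a \emph{losing} regime $t\le P_0$, in which $j$ is priced out and $x_j=0$; a \emph{marginal} regime $P_0<t\le P_1$, in which the clearing price equals $t$ and $j$ is rationed, giving $x_j=1-F_{-j}(t)$; and a \emph{winning} regime $t>P_1$, in which the clearing price settles at $P_1$ and $j$ is a strict winner, giving $x_j=d_j(P_1)$. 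Each of these is read off from where $F_{-j}(p)+d_j(p)\,\mathbf{1}[t>p]$ crosses $1$, using that $j$ is willing only at prices below $t$.

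It then remains to verify monotonicity, which splits into a \emph{within-regime} and an \emph{across-boundary} part. Within a regime it is immediate: $x_j$ is constantly $0$, equals $1-F_{-j}(t)$ (non-decreasing because $F_{-j}$ is non-increasing), or is the constant $d_j(P_1)$. At the boundary $t=P_0$ I use $F_{-j}(P_0)=1$ to get $1-F_{-j}(P_0)=0$, matching the losing regime; at the boundary $t=P_1$ I use the clearing identity $F_{-j}(P_1)+d_j(P_1)=1$ to get $1-F_{-j}(P_1)=d_j(P_1)$, matching the winning regime. Hence the three pieces glue into a continuous, non-decreasing function of $t$, which is exactly the monotonicity asserted.

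The main obstacle is the careful treatment of the degenerate and boundary situations hidden by this clean picture. First, the bridge lemma and the regime identities must be reconciled with ties, i.e.\ when another bidder has valuation exactly equal to the clearing price and $F$ (or $F_{-j}$) has a downward jump there; I expect to argue that such jumps only move $x_j$ upward, preserving monotonicity. Second, when the aggregate demand never reaches the unit supply (for instance $n=2$, where $F_{-j}(0)=\frac12<1$), the relevant clearing price is $0$ and the marginal regime degenerates to a point, so I must check that the formulas remain valid with $P_0=P_1=0$. Third, the purchase cap of $1/2$ can make $d_j$ flat, and I need to confirm this does not break the boundary identity at $t=P_1$. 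Handling these cases rigorously, rather than the generic argument, is where the real work lies.
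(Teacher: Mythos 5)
Your proposal is, at its core, the same argument as the paper's: both track a bidder's allocation through three consecutive phases (priced out, marginal/rationed, strict winner) as her valuation increases, and your thresholds $P_0$ and $P_1$ are exactly the paper's prices $q$ and $q'$ (specialized to bidder $k+1$) in clearing-price clothing. Your packaging is genuinely cleaner, though: the bridge lemma treats all bidders uniformly, whereas the paper disposes of bidders $i\le k$ and $i\ge k+2$ with one-line claims (whose justification---invariance of the division point and of the relevant price when a winner raises her bid---is precisely what your winning-regime argument supplies) and then performs a hands-on case analysis for bidder $k+1$ involving auxiliary prices $q'$, $q''$, $q'''$. So as an organizational device your route has real value.

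However, there is one concrete defect, and it sits exactly on top of the cases you defer. Your gluing at $t=P_1$ invokes the ``clearing identity'' $F_{-j}(P_1)+d_j(P_1)=1$. This is false in general: writing $G=F_{-j}+d_j$, the function $G$ is right-continuous and non-increasing with downward jumps at the points $v_i$, $i\neq j$, so the definition of $P_1$ only yields $G(P_1)\le 1$, and equality fails precisely when some other bidder has $v_i=P_1$ (and also in the degenerate case $P_1=0$). But $t=P_1=v_i$ is a tie, i.e.\ one of the situations you postponed. The repair is to replace the identity by the one-sided inequality $G(P_1^-)\ge 1$ (from minimality of $P_1$), which gives $1-F_{-j}(P_1^-)\le d_j(P_1)$: the left limit of the marginal formula lies below the winning value, so the jump at the boundary is upward and monotonicity survives. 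A parallel correction is needed at $P_0$ and inside the marginal regime itself, where under ties the mechanism does not give bidder $j$ the amount $1-F_{-j}(t)$; her share depends on her position among equal-valued bidders in the mechanism's sort, which is why the paper must fix an explicit tie-breaking convention (ordering the deviating bidder $k+1$ before bidder $k$) before its Case~2 analysis. In short: your skeleton is sound and completable, but the tie and degeneracy handling you set aside as ``the real work'' is not a peripheral check---it constitutes essentially the entire content of the paper's proof, and two of the identities your outline states as equalities must be weakened to inequalities there.
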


\begin{lemma}[Budget Feasibility]\label{sol:approx-lw:budget}
  \cref{mech:approx-lw} is budget-feasible, that is $p_i\leq B_i$, for any $i\in [n]$.
\end{lemma}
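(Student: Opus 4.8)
The plan is to reduce budget feasibility to a single per-unit price comparison. By \cref{sol:approx-lw:dummy} the whole item is distributed among the real bidders, so $B_i = \alpha_i(1 - x_i)$ for every $i \in [n]$, and this is the quantity the payment must not exceed. First I would rewrite the Myerson payment of \cref{mech:approx-lw} in its critical-value form,
$$p_i = \int_{0}^{x_i(v_i,\vecv_{-i})} \theta_i(y)\,\mathrm{d}y,$$
where $\theta_i(y)$ is the smallest bid with which bidder $i$ secures an allocation of at least $y$; this reformulation is legitimate precisely because of the monotonicity proved in \cref{sol:approx-lw:monotonic}. Hence $p_i$ equals the area to the left of bidder $i$'s allocation curve, and controlling it amounts to controlling the thresholds $\theta_i(y)$.

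Second, I would isolate the \emph{effective uniform price} faced by bidder $i$, namely $p = \max\{q, v_{k+1}\}$ (the price $q$ in the case $q > v_{k+1}$ and the marginal value $v_{k+1}$ in the case $q \le v_{k+1}$), so that in every case $x_i = \min\{\frac{\alpha_i}{p+\alpha_i}, \frac{1}{2}\}$. The key identity is that when the purchase limit is slack, $x_i = \frac{\alpha_i}{p+\alpha_i}$ gives
$$p\, x_i = \frac{p\,\alpha_i}{p+\alpha_i} = \alpha_i\Bigl(1 - \tfrac{\alpha_i}{p+\alpha_i}\Bigr) = \alpha_i(1-x_i) = B_i,$$
while when the limit binds, $x_i = \tfrac12$ forces $\alpha_i \ge p$, so $p\,x_i = p/2 \le \alpha_i/2 = B_i$. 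In both regimes $p\,x_i \le B_i$, so it suffices to establish the single inequality $p_i \le p\, x_i$.

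Third, I would reduce $p_i \le p\,x_i$ to the statement that every threshold satisfies $\theta_i(y) \le p$ for $y \in (0, x_i]$; equivalently, that bidder $i$ already attains her full (possibly capped) allocation $x_i$ once her report reaches the clearing price $p$. Granting this, $p_i = \int_0^{x_i}\theta_i(y)\,\mathrm{d}y \le p\,x_i$ immediately. The same fact can be phrased through the integral directly: since $x_i(\cdot,\vecv_{-i})$ is non-decreasing and bounded above by $x_i$ on $[0,v_i]$, its area over $[0,v_i]$ is at least $(v_i - p)x_i$ as soon as the curve has reached height $x_i$ by the report $p$, and this bound rearranges to exactly $p_i \le p\,x_i$.

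The main obstacle I anticipate is that the price $p$ is itself a function of bidder $i$'s report: as she lowers her bid she slides down the sorted order, which may shift the division point $k$ and move the root $q$ of the clearing equation $\sum_{i=1}^{k}\min\{\frac{\alpha_i}{q+\alpha_i}, \frac{1}{2}\} = 1$, so $p$ cannot be treated as fixed while her bid varies. The heart of the argument is therefore a careful tracking of how $x_i(z,\vecv_{-i})$ and the clearing configuration co-evolve across the three regimes, namely (i) bidder $i$ winning with $q > v_{k+1}$, (ii) the regime $q \le v_{k+1}$ in which bidder $i$ may temporarily be the residual recipient $k+1$ facing $v_{k+1}$ rather than $q$, and (iii) the activation of the purchase limit. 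Monotonicity (\cref{sol:approx-lw:monotonic}) guarantees each $\theta_i(y)$ is well defined; the work, and the expected sticking point, is to verify that each threshold stays at or below $p$ through the transitional case where lowering bidder $i$'s report turns her into the marginal (bidder-$(k+1)$) recipient.
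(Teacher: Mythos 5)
Your overall route is the same as the paper's: bound each winner's payment by an effective price times her allocation, and then show this product is at most $\alpha_i(1-x_i)$. Your step 2 is correct for the bidders $i\le k$ (both the slack-cap identity $p\,x_i=\alpha_i(1-x_i)$ and the binding-cap bound $p/2\le\alpha_i/2$), and your reduction of $p_i\le p\,x_i$ to ``all thresholds $\theta_i(y)$ with $y\le x_i$ are at most $p$'' is sound. But there are two genuine gaps. First, the claim that ``in every case $x_i=\min\{\frac{\alpha_i}{p+\alpha_i},\frac{1}{2}\}$'' is false for bidder $k+1$ in the case $q\le v_{k+1}$: she is the residual recipient, gets $x_{k+1}=1-\sum_{j\in[k]}x_j$, which by \eqref{eq:approx-lw:xkplus1} is generally \emph{strictly} below $\min\{\frac{\alpha_{k+1}}{v_{k+1}+\alpha_{k+1}},\frac{1}{2}\}$, and she does receive a positive allocation and make a positive payment, so her budget feasibility must be certified separately. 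Your per-unit identity never covers her. The repair is exactly the paper's: from \eqref{eq:approx-lw:xkplus1}, $x_{k+1}\le\frac{\alpha_{k+1}}{v_{k+1}+\alpha_{k+1}}$, hence $v_{k+1}x_{k+1}\le\alpha_{k+1}(1-x_{k+1})=B_{k+1}$, and $p_{k+1}\le v_{k+1}x_{k+1}$ holds because all of her thresholds are at most her own bid.

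Second, you explicitly defer the crux --- verifying that every threshold of a winner $i\le k$ is at most $p=\max\{q,v_{k+1}\}$ --- as ``the remaining work,'' so the proposal is a correct skeleton rather than a proof. This step does go through, and it is what the paper disposes of in one line (``the allocation for bidder $i$ does not change with $v_i$ as long as $v_i>q$''): if a winner $i\le k$ lowers her report $z$ while keeping $z>p$, the set of top-$k$ bidders is unchanged; the division point remains $k$, because at level $k$ the $k$-th highest bid is still above $q$ (so the defining sum is still at most $1$), while the level-$(k+1)$ inequality involves only the unchanged bid $v_{k+1}$ and the same set of factors; the uniform price $q$ is unchanged, since it solves an equation depending only on $\{\alpha_j\}_{j\le k}$; and the case dichotomy ($q>v_{k+1}$ versus $q\le v_{k+1}$) is therefore also unchanged. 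Hence $x_i(z,v_{-i})=x_i$ for all $z\in(p,v_i]$, which is precisely the statement $\theta_i(y)\le p$ for $y\le x_i$ that your argument needs. Your worry that the clearing configuration ``co-evolves'' with bidder $i$'s report is legitimate in general, but above the effective price nothing moves; once you add this observation (and the bidder-$(k+1)$ patch), your proof coincides with the paper's.
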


Despite the properties of monotonicity and budget feasibility, we also evaluate the efficiency of \cref{mech:approx-lw} by proving its constant approximation ratio.

\begin{lemma}[$1/3$-Approximation]\label{sol:approx-lw:approx}
  \cref{mech:approx-lw} achieves an approximation ratio of $1/3$ with respect to the liquid welfare.
\end{lemma}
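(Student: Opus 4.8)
The plan is to bound the liquid welfare of \cref{mech:approx-lw} against $\opt(I)$ by first understanding the structure of the optimal allocation from \cref{alg:allocation} and comparing it bidder-by-bidder to the mechanism's output. The key conceptual tool is the uniform price $q$ and the division point $k$: for bidders $i \le k$, the mechanism allocates $x_i = \min\{\frac{\alpha_i}{q'+\alpha_i}, \frac12\}$ where $q' = q$ or $q' = v_{k+1}$ depending on the case. Since all items are allocated (\cref{sol:approx-lw:dummy}), each bidder's budget is $B_i = \alpha_i(1-x_i)$, and for a bidder receiving $x_i = \frac{\alpha_i}{q'+\alpha_i}$ the value $v_i x_i$ and the budget $B_i = \frac{q'\alpha_i}{q'+\alpha_i}$ are comparable through $q'$, so the liquid welfare contribution $\min\{v_i x_i, B_i\}$ is controlled by $q'$ times the allocation. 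The first step is therefore to write $\mathrm{LW}(\vecx) = \sum_{i} \min\{v_i x_i, B_i\}$ for the mechanism's allocation and express it in terms of $q'$ and the $x_i$'s.

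Next I would classify the optimal contribution of each bidder. For any bidder, the optimal contribution $\min\{v_i x_i^*, B_i^*\}$ is at most $v_i x_i^* \le v_i \cdot \frac{\alpha_i}{v_i+\alpha_i} = \frac{v_i \alpha_i}{v_i + \alpha_i}$ when $x_i^* \le \frac{\alpha_i}{v_i+\alpha_i}$, and the surplus item assigned to the $\argmin_i \alpha_i$ bidder in \cref{alg:allocation} contributes a budget term. The plan is to split $\opt(I)$ into the contributions of the ``high'' bidders $i \le k$ and the ``low'' bidders $i > k$ (plus the overflow term), and separately to split the mechanism's welfare the same way. For the high bidders, I expect each optimal contribution to be within a constant factor of the mechanism's contribution, using that the purchase-limit cap of $\frac12$ is exactly what prevents the unbounded gap illustrated in the paragraph on \emph{The Role of the Purchase Limit} — there the gap between $\Gamma$ and $\Gamma^*$ blows up only without the cap, and with the cap one shows each bidder's optimal contribution is at most a constant (I anticipate $3$) times what the mechanism secures.

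The main obstacle, and the heart of the argument, will be bounding the optimal welfare coming from bidders that the mechanism allocates little or nothing to, i.e.\ the bidders $i > k$ and the gap created by the purchase limit on high-valuation, high-$\alpha$ bidders. The delicate point is that an optimal allocation may give substantial value to a bidder whom the mechanism caps at $\frac12$ or excludes, and one must charge this lost value against the welfare the mechanism does collect from the other bidders who fill the item up to its unit capacity. I would use the two defining relations — $\sum_{i\le k}\min\{\frac{\alpha_i}{q+\alpha_i},\frac12\}=1$ at the uniform price and the threshold inequality $\sum_{i\le k+1}\min\{\frac{\alpha_i}{v_{k+1}+\alpha_i},\frac12\} > 1$ from the definition of $k$ together with \eqref{eq:approx-lw:xkplus1} — to show that the total budget mass collected by the mechanism is bounded below by a constant fraction of $q'$ (or of $v_{k+1}$), and then argue that $q'$ (resp.\ $v_{k+1}$) upper-bounds the per-unit value any excluded optimal bidder can contribute.

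Concretely, I would establish two complementary bounds: (i) the mechanism's welfare is at least a constant times $\sum_{i\le k} \min\{v_i x_i, B_i\}$, matching the high-bidder part of $\opt(I)$ up to the factor $3$; and (ii) the optimal welfare from the remaining bidders is bounded by the uniform-price level times the total allocated mass, which the mechanism itself captures a constant fraction of, precisely because every excluded bidder has valuation at most $v_{k+1} \le q$ (in the first case) or because the item is fully allocated at rate $v_{k+1}$ (in the second case). Summing the two cases and taking the worse constant yields the claimed ratio. I will present the proof by case analysis on whether $q > v_{k+1}$ or $q \le v_{k+1}$, since the binding constraint (uniform price versus the next bidder's valuation) differs, and I expect the $\frac13$ bottleneck to arise in exactly one of these cases where the purchase limit is active.
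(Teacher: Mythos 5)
Your proposal follows essentially the same route as the paper's proof: the same case split on $q > v_{k+1}$ versus $q \le v_{k+1}$, the same decomposition of $\opt$ into the bidders $i \le k$ (each bounded by $\frac{v_i\alpha_i}{v_i+\alpha_i}$ and compared per-bidder to the mechanism's contribution, with the purchase limit supplying the constant) plus a tail term charged against $\alg \ge q > v_{k+1}$ (resp.\ $\alg \ge v_{k+1}$). The paper executes exactly this plan, splitting $[k]$ into the bidders where the $1/2$ cap binds and where it does not, which makes the per-bidder comparison a factor of $2$ (slightly better than the $3$ you anticipated), yielding $\opt \le 2\cdot\alg + \alg = 3\cdot\alg$.
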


\begin{proof}

For an instance $I=\{v_i,\alpha_i\}_{i\in [n]}$, let $\alg$ denote the liquid welfare achieved by \cref{mech:approx-lw}, and let $\opt$ denote its optimal liquid welfare (for instance, provided by \cref{alg:allocation}).
The corresponding allocations are denoted by $(x_1, x_2, \cdots, x_n)$ and $(x_1^*, x_2^*, \cdots, x_n^*)$, respectively.
We discuss following two cases.

\noindent{\bf Case 1:} $q>v_{k+1}$.
Define two sets $A\deq \set*{i\in[k]\mid \frac{\alpha_i}{q+\alpha_i}\le\frac{1}{2}}$ and $B\deq [k]-A$.
Thus, $q\ge \alpha_i$ for $i\in A$ and $q<\alpha_i$ for $i\in B$.
Moreover, $x_i=\frac{\alpha_i}{q+\alpha_i}$ for $i\in A$ and $x_i=1/2$ for $i\in B$.
Therefore we have
\begin{equation*}
  1=\sum_{i\in [n]} x_i = \sum_{i\in [k]} \min\set*{\frac{\alpha_i}{q+\alpha_i},\frac{1}{2}}
  =\sum_{i\in A} \frac{\alpha_i}{q+\alpha_i} + \sum_{i\in B} \frac{1}{2}.
\end{equation*}
Simplifying, we get 
\begin{equation}\label{eq:approx-lw:case1:sumA}
  \sum_{i\in A} \frac{\alpha_i}{q+\alpha_i} = 1 - \frac{\card{B}}{2}.
\end{equation}

Since $q\le v_k\le v_i$, we have $v_ix_i\ge \alpha_i(1-x_i)$ for $i\in A$.
Thus,
\begin{equation}\label{eq:approx-lw:case1:alg}
  \begin{aligned}
    \alg
    =& \sum_{i\in[n]} \min\set*{v_ix_i,\alpha_i(1-x_i)} 
    = \sum_{i\in A} \alpha_i(1-x_i) + \sum_{i\in B} \min\set*{\frac{v_i}{2},\frac{\alpha_i}{2}} 
    \\
    =& q \cdot \sum_{i\in A} \frac{\alpha_i}{q+\alpha_i} + \frac{1}{2} \cdot \sum_{i\in B} \min\set{v_i,\alpha_i} \\
    =& q \cdot \left(1-\frac{\card{B}}{2}\right) + \frac{1}{2} \cdot \sum_{i\in B} \min\set{v_i,\alpha_i} \\
    =& q + \frac{1}{2} \cdot \sum_{i\in B} \left(\min\set{v_i,\alpha_i}-q\right) \ge q > v_{k+1}.
  \end{aligned}
\end{equation}
The penultimate equality holds because of \eqref{eq:approx-lw:case1:sumA}.
The penultimate inequality holds because $q < \alpha_i$ for $i \in B$.
The last inequality holds because $q > v_{k+1}$, as the condition in Case 1.

For all $i\in A$, since $q\ge \alpha_i$, we have
\begin{equation*}
\begin{aligned}
  \frac{\left(\frac{v_i\alpha_i}{v_i+\alpha_i}\right)}{ \left(\frac{q\alpha_i}{q+\alpha_i}\right)}
  = \frac{\left(\frac{v_i}{v_i+\alpha_i}\right)}{\left(\frac{q}{q+\alpha_i}\right)} \le \frac{1}{\left(\frac{q}{q+q}\right)} = 2.
\end{aligned}
\end{equation*}
Therefore, 
\begin{equation}\label{eq:approx-lw:case1:vi}
  \sum_{i\in A} \frac{v_i\alpha_i}{v_i+\alpha_i}
  \le \sum_{i\in A} \left(2 \cdot \frac{q\alpha_i}{q+\alpha_i}\right)
  = 2 \cdot q \cdot \left(1-\frac{\card{B}}{2}\right).
\end{equation}
The last equality holds because of \eqref{eq:approx-lw:case1:sumA}.

Now we consider $\opt$:
\begin{equation*}
  \begin{aligned}
    \opt = & \sum_{i\in[n]} \min\set*{v_ix_i^*,\alpha_i(1-x_i^*)} 
    \le \sum_{i\in[k]} \frac{v_i\alpha_i}{v_i+\alpha_i} + \sum_{i=k+1}^n (v_i x_i^*)
    \\
    \le& \sum_{i\in[k]} \frac{v_i\alpha_i}{v_i+\alpha_i} + v_{k+1} = \sum_{i\in A} \frac{v_i\alpha_i}{v_i+\alpha_i} + \sum_{i\in B} \frac{v_i\alpha_i}{v_i+\alpha_i} + v_{k+1} \\
    \le& 2 \cdot \left(q\cdot\left(1-\frac{\card{B}}{2}\right) + \frac{1}{2} \cdot \sum_{i\in B}\min\set{v_i,\alpha_i}\right) + v_{k+1} \\
    \le& 2\cdot \alg + \alg = 3\cdot\alg.
  \end{aligned}
\end{equation*}
The third inequality holds because of \eqref{eq:approx-lw:case1:vi} and the fact that $\frac{v_i\alpha_i}{v_i+\alpha_i}\le \min\{v_i,\alpha_i\}$ for all $i\in [n]$.
The last equality holds because of the penultimate equality and the final inequality in \eqref{eq:approx-lw:case1:alg}.

\noindent{\bf Case 2:}  $q\le v_{k+1}$.
For simplicity, we reuse the symbols and redefine the sets $A$ and $B$ as follows:
$A\deq \set*{i\in[k]\mid \frac{\alpha_i}{v_{k+1}+\alpha_i}\le\frac{1}{2}}$, and $B\deq [k]-A$.
Thus, $v_{k+1}\ge \alpha_i$ for $i\in A$ and $v_{k+1}<\alpha_i$ for $i\in B$.
Moreover, $x_i=\frac{\alpha_i}{v_{k+1}+\alpha_i}$ for $i\in A$ and $x_i=1/2$ for $i\in B$.

According to \eqref{eq:approx-lw:xkplus1}, $x_{k+1}\le \frac{1}{2}$ and $x_{k+1}\le \frac{\alpha_{k+1}}{v_{k+1}+\alpha_{k+1}}$.
Thus $v_{k+1}x_{k+1}\le \alpha_{k+1}(1-x_{k+1})$.

Since $v_{k+1}\le v_i$, we have $v_ix_i\ge \alpha_i(1-x_i)$ for $i\in A$. So,
\begin{equation}\label{eq:approx-lw:case2:alg}
  \begin{aligned}
    \alg =& \sum_{i\in A} \alpha_i(1-x_i) + \sum_{i\in B} \min\set*{\frac{v_i}{2},\frac{\alpha_i}{2}} +v_{k+1}x_{k+1} \\
    =& \sum_{i\in A} \alpha_i \cdot \left(1-\frac{\alpha_i}{v_{k+1}+\alpha_i}\right) + \frac{1}{2} \cdot \sum_{i\in B} \min\set{v_i,\alpha_i} \\
    &\quad + v_{k+1} \cdot \left(1-\sum_{i\in A}\frac{\alpha_i}{v_{k+1}+\alpha_i}-\sum_{i\in B}\frac{1}{2}\right)\\
    =&v_{k+1}+\frac{1}{2}\cdot \sum_{i\in B}(\min\set{v_i,\alpha_i}-v_{k+1}) \ge v_{k+1}.
  \end{aligned}
\end{equation}
The last inequality holds because $v_{k+1}<\alpha_i$ for $i\in B$.
Furthermore, from the second equality in \eqref{eq:approx-lw:case2:alg} and the non-negativity of $x_{k+1}$ as shown in \eqref{eq:approx-lw:xkplus1}, we have
\begin{equation}\label{eq:approx-lw:case2:alg2}
  \begin{aligned}
    \alg \ge \sum_{i\in A} \alpha_i \cdot \left(1-\frac{\alpha_i}{v_{k+1}+\alpha_i}\right) + \frac{1}{2} \cdot \sum_{i\in B} \min\set{v_i,\alpha_i}.
  \end{aligned}
\end{equation}

Similarly to Case 1, for all $i \in A$, since $v_{k+1} \ge \alpha_i$, we have
\begin{equation}\label{eq:approx-lw:case2:vi}
  \begin{aligned}
    \frac{\left(\frac{v_i\alpha_i}{v_i+\alpha_i}\right)}{\alpha_i \cdot \left(1 - \frac{\alpha_i}{v_{k+1}+\alpha_i}\right)} 
    = \frac{\left(\frac{v_i}{v_i+\alpha_i}\right)}{\left(\frac{v_{k+1}}{v_{k+1}+\alpha_i}\right)} \le \frac{1}{\left(\frac{v_{k+1}}{v_{k+1}+v_{k+1}}\right)} = 2.
  \end{aligned}  
\end{equation}

Now we consider $\opt$:
\begin{equation*}
  \begin{aligned}
    \opt \le& \sum_{i\in[k]} \frac{v_i\alpha_i}{v_i+\alpha_i} + v_{k+1} = \sum_{i\in A} \frac{v_i\alpha_i}{v_i+\alpha_i} + \sum_{i\in B} \frac{v_i\alpha_i}{v_i+\alpha_i} + v_{k+1} \\
    \le& 2\cdot \left(\sum_{i\in A} \alpha_i\cdot\left(1-\frac{\alpha_i}{v_{k+1}+\alpha_i}\right) + \frac{1}{2}\cdot \sum_{i\in B} \min\set{v_i,\alpha_i}\right) + v_{k+1} \\
    \le& 2\cdot\alg + \alg = 3\cdot\alg.
  \end{aligned}
\end{equation*}
The first inequality comes from the same argument as in Case 1.
The second inequality holds because of \eqref{eq:approx-lw:case2:vi} and the fact that $\frac{v_i\alpha_i}{v_i+\alpha_i}\le \min\{v_i,\alpha_i\}$ for all $i\in [n]$.
The last inequality holds because of \eqref{eq:approx-lw:case2:alg} and \eqref{eq:approx-lw:case2:alg2}.

Combining the two cases, we have $\alg \ge 1/3\cdot \opt$.
\end{proof}

Combining \cref{sol:approx-lw:dummy,sol:approx-lw:approx,sol:approx-lw:monotonic,sol:approx-lw:budget}, along with Myerson's Lemma (\cref{lem:myerson}), directly proves \cref{thm:approx-lw}.

\subsection{Upper Bound on the Approximation Ratio}

We now show that there is an upper bound on the approximation ratio with respect to liquid welfare that any truthful mechanism can achieve.

\begin{lemma}[$1/2$-Approximation Upper Bound]\label{sol:approx-lw:upper-bound}
  It is impossible for any truthful mechanism to achieve an approximation ratio better than $1/2$ for liquid welfare.
\end{lemma}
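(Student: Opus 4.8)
The plan is to exploit the tension between truthfulness and the inherently non-monotone optimal allocation noted right after \cref{alg:allocation}: the optimum gives the top bidder a share $\frac{\alpha_i}{v_i+\alpha_i}$ that \emph{decreases} in her value, whereas any truthful mechanism must have a non-decreasing allocation rule. First I would invoke Myerson's Lemma (\cref{lem:myerson}): since each bidder is single-parameter (type $=v_i$, with $\alpha_i$ public), truthfulness forces $x_i(v_i,\vecv_{-i})$ to be non-decreasing in $v_i$. This reduces the task to showing that no monotone allocation rule can beat $1/2$, and, crucially, it means I never have to reason about payments, because liquid welfare depends only on the allocation.

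The construction I would use is a minimal two-bidder family. Fix $\alpha_1=\alpha_2=1$ and set bidder $2$'s valuation to $v_2=0$, so bidder $2$ contributes nothing to liquid welfare directly but serves purely as the budget source for bidder $1$, whose budget is $B_1=\alpha_1 x_2=1-x_1$ when the item is fully allocated. Consider two instances differing only in bidder $1$'s truthful report: a \emph{low} instance $v_1=\varepsilon$ and a \emph{high} instance $v_1=1/\varepsilon$, for small $\varepsilon\in(0,1)$. Applying \cref{alg:allocation} (equivalently, the closed form $\frac{v_1\alpha_1}{v_1+\alpha_1}$) gives $\opt_L=\frac{\varepsilon}{1+\varepsilon}$ and $\opt_H=\frac{1}{1+\varepsilon}$; note that the optimal share of bidder $1$ drops from $\frac{1}{1+\varepsilon}$ to $\frac{\varepsilon}{1+\varepsilon}$ as her value grows, which is exactly the behavior a monotone rule cannot track.

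Next I would write $p\deq x_1(\varepsilon,0)$ and $q\deq x_1(1/\varepsilon,0)$, so monotonicity yields $p\le q$. Because bidder $2$ contributes $0$ and bidder $1$ contributes $\min\{v_1 x_1,\,\alpha_1 x_2\}$ with $x_2\le 1-x_1$, I can bound the achieved welfare using only $x_1$: in the low instance $\alg_L\le\min\{\varepsilon p,\,1-p\}\le \varepsilon p$, and in the high instance $\alg_H\le \alpha_1 x_2\le 1-q$. These give $\frac{\alg_L}{\opt_L}\le (1+\varepsilon)p$ and $\frac{\alg_H}{\opt_H}\le (1+\varepsilon)(1-q)\le (1+\varepsilon)(1-p)$. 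Since the approximation ratio of any fixed truthful mechanism is at most its ratio on every instance, it is at most $\min\{(1+\varepsilon)p,\,(1+\varepsilon)(1-p)\}\le\frac{1+\varepsilon}{2}$; letting $\varepsilon\to 0$ forces the ratio to be at most $1/2$.

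The main obstacle is organizing the argument so that the mechanism's freedom over bidder $2$'s allocation and over payments cannot rescue it. The plan handles this cleanly: bounding each instance by the $\min$ term that discards $x_2$ (using $\min\{\varepsilon p,\cdot\}\le\varepsilon p$ and $\min\{\cdot,\alpha_1 x_2\}\le 1-q$) strips away all dependence on everything except the single monotone quantity $x_1$, after which the $p$ versus $1-p$ trade-off pins the bound at $1/2$. Secondary care points are verifying the two optimum values through \cref{alg:allocation} and confirming admissibility ($\alpha_i>0$, $n\ge 2$), both of which are immediate.
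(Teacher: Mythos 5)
Your proposal is correct and takes essentially the same route as the paper's proof: a two-bidder, two-instance construction in which Myerson monotonicity forces bidder 1's allocation to be large in the low-value instance (where a good ratio needs $x_1$ large) yet small in the high-value instance (where the budget externality caps welfare by $1-x_1$), yielding the $1/2$ bound in the limit. The only difference is cosmetic: you fix $\alpha_1=\alpha_2=1$, set $v_2=0$, and scale $v_1$ via $\varepsilon$, whereas the paper sets $v_2=\alpha_2=1$ and lets $\alpha_1\to\infty$ with $v_1\in\{\sqrt{\alpha_1},\alpha_1^2\}$; your choice makes bidder 2's welfare contribution vanish and slightly streamlines the algebra.
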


\begin{proof}
  Consider any truthful mechanism that achieves an approximation ratio of $\rho$ for liquid welfare. Suppose there are two bidders with valuations $v_1$ and $v_2$, and budget impact factors $\alpha_1$ and $\alpha_2$, where $v_2 = \alpha_2 = 1$ and $\alpha_1> 1$. Let $x_1(v_1, v_2)$ and $x_2(v_1, v_2)$ denote the allocations for bidders 1 and 2, respectively.
  
  Now consider the scenario in which bidder 1's valuation is $v_1 = \alpha_1^2 > v_2$.
  From \cref{alg:allocation}, the optimal allocation assigns $x_1^* = \frac{1}{\alpha_1 + 1}<\frac{1}{2}$ to bidder 1 and $x_2^* = 1 - \frac{1}{\alpha_1 + 1}>x_1^*$ to bidder 2. 
  The optimal liquid welfare is 
  \begin{equation*}
    \opt = \min\set*{\alpha_1^2 x_1^*, \alpha_1 x_2^*} + \min\set*{x_2^*, x_1^*} = \frac{\alpha_1^2 + 1}{\alpha_1 + 1}. 
  \end{equation*} 
  Because the mechanism is $\rho$-approximation, we have 
  \begin{equation*}
    \min\set*{\alpha_1^2x_1(\alpha_1^2,1),\alpha_1x_2(\alpha_1^2,1)}+\min\set*{x_2(\alpha_1^2,1),x_1(\alpha_1^2,1)}\ge \rho\cdot\opt.
  \end{equation*}
  Additionally, 
  \begin{equation*}
    \min\set*{\alpha_1^2x_1(\alpha_1^2,1),\alpha_1x_2(\alpha_1^2,1)}+\min\set*{x_2(\alpha_1^2,1),x_1(\alpha_1^2,1)} \le (\alpha_1+1)\cdot x_2(\alpha_1^2,1).
  \end{equation*}
  Therefore, we have $x_2(\alpha_1^2,1) \ge \rho \cdot \frac{\alpha_1^2 + 1}{(\alpha_1+1)^2}$.
  As $x_1(\alpha_1^2,1)+x_2(\alpha_1^2,1)\le 1$, we have $x_1(\alpha_1^2,1) \le 1 - \rho \cdot \frac{\alpha_1^2 + 1}{(\alpha_1+1)^2}$.

  Consider another scenario in which bidder 1's valuation is $v_1 = \sqrt{\alpha_1} > v_2$.
  The optimal allocation assigns $x_1'^* = \frac{\sqrt{\alpha_1}}{\sqrt{\alpha_1} + 1}>\frac{1}{2}$ to bidder 1 and $x_2'^* = 1 - \frac{\sqrt{\alpha_1}}{\sqrt{\alpha_1} + 1}<x_1'^*$ to bidder 2.
  The optimal liquid welfare is 
 $
    \opt' = \min\set*{\sqrt{\alpha_1} x_1'^*, \alpha_1 x_2'^*} + \min\set*{x_2'^*, x_1'^*} = \frac{\alpha_1 + 1}{\sqrt{\alpha_1} + 1}. 
 $ 
  By the same argument, we have
  \begin{equation*}
    \begin{aligned}
      &(\sqrt{\alpha_1}+1)\cdot x_1(\sqrt{\alpha_1},1) \\
      \ge& \min\set*{\sqrt{\alpha_1}x_1(\sqrt{\alpha_1},1),\alpha_1x_2(\sqrt{\alpha_1},1)}+\min\set*{x_2(\sqrt{\alpha_1},1)+x_1(\sqrt{\alpha_1},1)}\\
      \ge& \rho\cdot\opt' = \rho\cdot \frac{\alpha_1 + 1}{\sqrt{\alpha_1} + 1}.
    \end{aligned}
  \end{equation*}
  Therefore, we have $x_1(\sqrt{\alpha_1},1) \ge \rho\cdot \frac{\alpha_1 + 1}{(\sqrt{\alpha_1} + 1)^2}$.
  Since the mechanism is truthful and $\alpha_1^2 > \sqrt{\alpha_1}$, it follows from Myerson's Lemma (\cref{lem:myerson}) that $x_1(\alpha_1^2,1) \ge x_1(\sqrt{\alpha_1},1)$.
  Therefore, we have
  \begin{equation*}
    1 - \rho \cdot \frac{\alpha_1^2 + 1}{(\alpha_1+1)^2}
    \ge
    \rho\cdot \frac{\alpha_1 + 1}{(\sqrt{\alpha_1} + 1)^2}.
  \end{equation*}
  By simple algebra, we have
  \begin{equation*}
    \rho \le 
    1 \Big/ \left(
      \frac{\alpha_1^2 + 1}{(\alpha_1+1)^2} + 
      \frac{\alpha_1 + 1}{(\sqrt{\alpha_1} + 1)^2}
    \right).
  \end{equation*}
  Let $\alpha_1 \to \infty$, we have $\rho \le 1/2$.
\end{proof}

\section{Conclusion}\label{sec:conclusion}

This paper studies a novel problem of the allocation-induced externality on budgets. 
This issue can be regarded as a practical extension of budget constraints in auctions and also represents a new category of allocation externality problems. 
For this problem, we first propose an allocation algorithm designed to maximize liquid welfare, a natural objective in budgeted settings, from the perspective of algorithm design.  
While this algorithm achieves optimal welfare, it lacks monotonicity, which is a challenge for us to obtain a truthful mechanism.
Therefore, we are motivated by the mechanism design for the uniform-price auction, and adaptively assign items to ensure the monotonicity of the allocation rule while accounting for budget adjustments. 
This approach enables us to design a truthful and individually rational auction mechanism that provides a $1/3$-approximation for the liquid welfare. 
Moreover, we derive an upper bound of the approximation ratio by proving that no truthful mechanism can achieve an approximation ratio better than $1/2$.
Closing this approximation gap is an interesting direction for future work. 
Additionally, there are many practically significant extensions of allocation externalities on budgets that merit further investigation, such as heterogeneous externality impacts and considering externalities on both valuations and budgets simultaneously. Addressing these problems requires new mechanism design and analytical techniques.

\newpage
\bibliographystyle{splncs04}
\bibliography{reference}

\newpage
\appendix

\section{Omitted Proofs in \cref{sec:optimal}}\label{sec:proof-optimal}

\subsection{Proof of \cref{thm:opt-allocation}}\label{sec:proof-optimal-thm}

We begin by constructing an optimal allocation that adheres to properties (P1)-(P4) as outlined in the proof sketch.

We first construct an optimal allocation that satisfies (P1).
Consider any optimal allocation $\vecx=(x_1,x_2,\cdots,x_n)$ that achieves the optimal liquid welfare.
If $\sum_{i\in[n]} x_i=1$, then $\vecx$ is the desired allocation.
Otherwise, let $\epsilon\deq 1-\sum_{i\in[n]} x_i$, and allocate $\epsilon$ to arbitrary bidder $i$.
Then bidder $i$'s value $v_ix_i$ will increase $v_i\epsilon$, while $i$'s budget $\alpha_i(1-(x_i+\epsilon))=\alpha_i(1-\epsilon-x_i)$ remains unchanged.
In the meanwhile, the value of other bidder $j$ will not be changed, and their budget will increase $\alpha_i\epsilon$.
Therefore, the liquid welfare will not decrease.
Thus, the new allocation is optimal and allocates all items.

Next, we construct an optimal allocation that satisfies (P1) and (P2).
Denote $\vecx=(x_1,x_2,\cdots,x_n)$ as the optimal allocation that satisfies (P1), i.e., $\sum_{i\in[n]} x_i=1$.
Thus, the budget of bidder $i$ is $B_i(\vecx_{-i})=\alpha_i\cdot\sum_{j\in[n]-i} x_j=\alpha_i\cdot (1-x_i)$.
If $\vecx$ satisfies (P2), then $\vecx$ is the desired allocation.
Otherwise, let $i\neq \l$ be the first bidder that violates (P2), i.e., $x_i > \frac{\alpha_i}{v_i+\alpha_i}$.
Let $\epsilon\deq x_i - \frac{\alpha_i}{v_i+\alpha_i}$, and construct a new allocation $\vecx'=(x_1',x_2',\cdots,x_n')$ by reallocating $\epsilon$ from bidder $i$ to bidder $\l$, i.e., $x_i'=x_i-\epsilon$, $x_\l'=x_\l+\epsilon$ and $x_k'=x_k$ for all $k\neq i,\l$.
Then the change of liquid welfare is
\begin{equation*}
  \begin{aligned}
    \lw(\vecx')-\lw(\vecx) =& \min\set{v_ix_i',\alpha_i(1-x_i')} - \min\set{v_ix_i,\alpha_i(1-x_i)} \\
    &\quad + \min\set{v_\l x_\l',\alpha_\l(1-x_\l')} - \min\set{v_\l x_\l,\alpha_\l(1-x_\l)} \\
    \ge& \alpha_i\epsilon - \alpha_\l\epsilon \ge 0.
  \end{aligned}
\end{equation*}
The last inequality holds because $\alpha_i\ge \alpha_\l$.
After the above item reallocation, we have $x_i'=\frac{\alpha_i}{v_i+\alpha_i}$.
Thus, each execution of this process reduces at least one violating bidder of (P2), and the liquid welfare does not decrease.
Therefore, by repeating the above process, we can ultimately obtain an optimal allocation that satisfies (P1) and (P2).

Then we construct an optimal allocation that satisfies (P1)-(P3).
Let $\vecx=(x_1,x_2,\cdots,x_n)$ be the optimal allocation that satisfies (P1) and (P2).
If $\vecx$ satisfies (P3), then $\vecx$ is the desired allocation.
Otherwise, let $i$ and $j$ be the pair of bidders that violates (P3), where $i$ is the smallest index and $j$ is the largest index. 
Specifically, $i \deq \min\set*{i \mid x_i < \frac{\alpha_i}{v_i + \alpha_i}}$ and $j \deq \max\set*{j \mid x_j > 0}$.
Let $\epsilon\deq \min\set*{x_j,\frac{\alpha_i}{v_i+\alpha_i}-x_i}$, and construct a new allocation $\vecx'=(x_1',x_2',\cdots,x_n')$ by reallocating $\epsilon$ from bidder $j$ to bidder $i$, i.e., $x_i'=x_i+\epsilon$, $x_j'=x_j-\epsilon$ and $x_t'=x_t$ for all $t\neq i,j$.
Then the change of liquid welfare is
\begin{equation*}
  \begin{aligned}
    \lw(\vecx')-\lw(\vecx) =& \min\set{v_ix_i',\alpha_i(1-x_i')} - \min\set{v_ix_i,\alpha_i(1-x_i)} \\
    &\quad + \min\set{v_jx_j',\alpha_j(1-x_j')} - \min\set{v_jx_j,\alpha_j(1-x_j)} \\
    \ge& v_i\epsilon - v_j\epsilon \ge 0.
  \end{aligned}
\end{equation*}
The last inequality holds because $v_i \ge v_j$. 
After the above item reallocation, either $x_i' = \frac{\alpha_i}{v_i + \alpha_i}$ and $0\le x_j'<x_j$ or $x_j' = 0$ and $x_i' \le \frac{\alpha_i}{v_i + \alpha_i}$.
Thus, each execution of this process reduces at least one violating bidder pair of (P3), does not cause any bidder to violate (P2), and does not decrease the liquid welfare.
Therefore, by repeating the above process, we can ultimately obtain an optimal allocation that satisfies (P1)-(P3).

Finally, we construct an optimal allocation that satisfies (P1)-(P4).
Let $\vecx=(x_1,x_2,\cdots,x_n)$ be the optimal allocation that satisfies (P1)-(P3).
If $\vecx$ satisfies (P4), then $\vecx$ is the desired allocation.
Otherwise, let $i$ be the first bidder that violates (P4).
Specifically, $i\deq\min\set*{i\mid x_i < \frac{\alpha_i}{v_i+\alpha_i}}$ and $x_\l > \frac{\alpha_\l}{v_\l+\alpha_\l}$.
Since $\vecx$ satisfies (P2) and (P3), $x_\l >0$ implies that $x_j = \frac{\alpha_j}{v_j + \alpha_j}$ for all $j < \l$. 
Furthermore, since $i$ is the smallest index such taht $x_i<\frac{\alpha_i}{v_i+\alpha_i}$, it follows that $i > \l$.
Additionally, for all $\l < j < i$, we have $x_j = \frac{\alpha_j}{v_j + \alpha_j}$, and for all $j > i$, we have $x_j = 0$.
Let $\epsilon \deq \min\set*{x_\l-\frac{\alpha_\l}{v_\l+\alpha_\l}, \frac{\alpha_i}{v_i+\alpha_i}-x_i}$, and construct a new allocation $\vecx'=(x_1',x_2',\cdots,x_n')$ by reallocating $\epsilon$ from bidder $\l$ to bidder $i$, i.e., $x_i'=x_i+\epsilon$, $x_\l'=x_\l-\epsilon$ and $x_t'=x_t$ for all $t\neq i,\l$.
Then the change of liquid welfare is
\begin{equation*}
  \begin{aligned}
    \lw(\vecx')-\lw(\vecx) =& \min\set{v_ix_i',\alpha_i(1-x_i')} - \min\set{v_ix_i,\alpha_i(1-x_i)} \\
    &\quad + \min\set{v_\l x_\l',\alpha_\l(1-x_\l')} - \min\set{v_\l x_\l,\alpha_\l(1-x_\l)} \\
    =& v_i\epsilon + \alpha_\l\epsilon \ge 0.
  \end{aligned}
\end{equation*}
After the above item reallocation, either $x_i' = \frac{\alpha_i}{v_i + \alpha_i}$ and $\frac{\alpha_\l}{v_\l + \alpha_\l}\le x_\l'<x_\l$ or $x_\l' = \frac{\alpha_\l}{v_\l + \alpha_\l}$ and $x_i' \le \frac{\alpha_i}{v_i + \alpha_i}$.
Thus, each execution of this process reduces at least one violating bidder of (P4) or directly makes the new allocation satisfy (P4). 
Moreover, the process does not cause any bidder to violate (P2) or (P3) and does not decrease the liquid welfare.
Therefore, by repeating the above process, we can ultimately obtain an optimal allocation that satisfies (P1)-(P4).

Next, we prove that the allocation satisfying properties (P1)-(P4) is unique.

We proceed by contradiction. Assume that there exist two allocations $\vecx = (x_1, x_2, \cdots, x_n)$ and $\vecx' = (x_1', x_2', \cdots, x_n')$ that satisfy (P1)-(P4). Let $i$ be the smallest index such that $x_i \neq x_i'$. Without loss of generality, assume $0 \le x_i < x_i'$.
Since $\vecx$ and $\vecx'$ satisfy (P1), there must exist a bidder $j > i$ such that $x_j > x_j' \ge 0$. Due to (P3), $x_i' > 0$ implies that $x_t' \ge \frac{\alpha_t}{v_t + \alpha_t}$ for all $t < i$. Similarly, due to (P3), $x_j > 0$ implies that $x_t \ge \frac{\alpha_t}{v_t + \alpha_t}$ for all $t < j$. Thus, $x_i' > x_i \ge \frac{\alpha_i}{v_i + \alpha_i}$.
Due to (P2), we have $i = \ell$. It follows that $x_\ell' > \frac{\alpha_i}{v_i + \alpha_i}$. Together with (P2) and (P4), we have $x_t' = \frac{\alpha_t}{v_t + \alpha_t}$ for all $t \neq \ell$. Therefore, $x_j > x_j' = \frac{\alpha_j}{v_j + \alpha_j}$. Since $j \neq \ell$, this contradicts (P2).
Therefore, the allocation satisfying properties (P1)-(P4) is unique.

Clearly, the allocation provided by \cref{alg:allocation} satisfies (P1)-(P4).
Since the allocation satisfies (P1)-(P4) is unique, and there is at least one optimal allocation that satisfies (P1)-(P4), the allocation provided by \cref{alg:allocation} is optimal.

\section{Omitted Proofs in \cref{sec:mechanism}}\label{sec:proof-mechanism}

\subsection{Proof of \cref{sol:approx-lw:dummy}}\label{proof:approx-lw:dummy}

We consider three cases.

\textit{Case 1:} $\sum_{i=1}^n \min\set*{\frac{\alpha_i}{v_n+\alpha_i},\frac{1}{2}} > 1$.
Thus, by the definition of $k$, we have $k < n$. Adding a dummy bidder with $v_{n+1} = 0$ and $\alpha_{n+1} > 0$ will not affect the allocation.

\textit{Case 2:} $n=2$ and $\sum_{i=1}^n \min\set*{\frac{\alpha_i}{v_n+\alpha_i},\frac{1}{2}} \le 1$.
After adding the dummy bidder $n+1=3$, 
we have $\sum_{i=1}^3 \min\set*{\frac{\alpha_i}{v_3+\alpha_i},\frac{1}{2}} = \sum_{i=1}^3 \min\set*{\frac{\alpha_i}{\alpha_i},\frac{1}{2}} = \frac{3}{2} > 1$,
therefore the division point is 2.
Since $\sum_{i=1}^2 \min\set*{\frac{\alpha_i}{0+\alpha_i},\frac{1}{2}}=1$, we have $q=0=v_3$, thus $x_3=1-\sum_{i=1}^2 \min\set*{\frac{\alpha_i}{\alpha_i},\frac{1}{2}} = 1-1=0$.

\textit{Case 3:} $n>2$ and $\sum_{i=1}^n \min\set*{\frac{\alpha_i}{v_n+\alpha_i},\frac{1}{2}} \le 1$.
After adding the dummy bidder $n+1$, we have $\sum_{i=1}^{n+1} \min\set*{\frac{\alpha_i}{v_{n+1}+\alpha_i},\frac{1}{2}} = \sum_{i=1}^{n+1} \min\set*{\frac{\alpha_i}{\alpha_i},\frac{1}{2}} > \frac{3}{2} > 1$,
therefore the division point is $n$.
Since $\sum_{i=1}^n \min\set*{\frac{\alpha_i}{0+\alpha_i},\frac{1}{2}} \ge \frac{3}{2} > 1$,
we have $q>0=v_{n+1}$.
Thus $x_{n+1}=0$.

\subsection{Proof of \cref{sol:approx-lw:monotonic}}\label{proof:approx-lw:monotonic}

Let $k$ and $q$ denote the division point and the uniform price computed in \cref{mech:approx-lw}, respectively.
For bidders $j=1,2,\cdots,k$, $x_j$ remains unchanged if $v_j$ increases.
For bidders $j=k+2,\cdots,n$, $x_j$ remains at 0 if $v_j$ increases but still satisfies $v_j \le v_{k+1}$.
Thus, we only need to consider bidder $k+1$.

If $v_{k+1}<q$, we have $x_{k+1}=0$.
Let $q'$ be the smallest root of the equation $\sum_{i=1}^{k+1} \min\set*{\frac{\alpha_i}{q'+\alpha_i},\frac{1}{2}}=1$.
Clearly, $q' > q$ since $\alpha_{k+1}>0$, and $v_k \ge q$ by the definition of $k$ and $q$.
As $v_{k+1}$ increases from $q$ to $\min\set*{v_k,q'}$, $x_{k+1}=1-\sum_{i=1}^k\min\set*{\frac{\alpha_i}{v_{k+1}+\alpha_i},\frac{1}{2}}$, which is non-decreasing in $v_{k+1}$.
When $v_{k+1} = \min\set{v_k, q'}$, two cases need to be considered.

\textit{Case 1:} $q'\le v_k$.
In this case $v_{k+1}$ first increases to $v_{k+1}=q'$.
Therefore, we have
$\sum_{i=1}^{k+1}\min\set*{\frac{\alpha_i}{v_{k+1}+\alpha_i},\frac{1}{2}}=\sum_{i=1}^{k+1}\min\set*{\frac{\alpha_i}{q'+\alpha_i},\frac{1}{2}}=1$.
Additionally,
$\sum_{i=1}^{k+2}\min\set*{\frac{\alpha_i}{v_{k+2}+\alpha_i},\frac{1}{2}} > 1$ since $v_{k+2}\le v_{k+1}\le q'$ and $\alpha_{k+2}>0$.
Therefore, the division point will become $k+1$.
Thus, $x_{k+1}=\min\set*{\frac{\alpha_{k+1}}{q'+\alpha_{k+1}},\frac{1}{2}}=1-\sum_{i=1}^k\min\set*{\frac{\alpha_i}{q'+\alpha_i},\frac{1}{2}}$,
Therefore, $x_{k+1}$ changes continuously at $v_{k+1} = q'$.

\textit{Case 2:} $q' > v_k$.
In this case $v_{k+1}$ first increases to $v_{k+1}=v_k$. 
For simplicity, assume that ties in valuation are broken in favor of bidder $k+1$, i.e., bidder $k+1$ is ordered before bidder $k$ when $v_{k+1}=v_k$.
Thus we have
$\sum_{i=1}^{k+1}\min\set*{\frac{\alpha_i}{v_{k+1}+\alpha_i},\frac{1}{2}}=\sum_{i=1}^{k+1}\min\set*{\frac{\alpha_i}{v_k+\alpha_i},\frac{1}{2}}> \sum_{i=1}^{k+1}\min\set*{\frac{\alpha_i}{q'+\alpha_i},\frac{1}{2}}=1$.
The inequality holds because $q' > v_k$ and $q'$ is the smallest root of the final equation.
Therefore, the division point will be less than or equal to $k$. 
Due to the definition of $k$, the division point should be at least $k-1$. 
Thus, the division point could be either $k-1$ or $k$. We need to discuss the following two subcases.

\textit{Case 2-1:} $\sum_{i=1}^{k-1}\min\set*{\frac{\alpha_i}{v_k+\alpha_i},\frac{1}{2}}+\min\set*{\frac{\alpha_{k+1}}{v_k+\alpha_{k+1}},\frac{1}{2}}\le 1$, the division point is $k$.
Let $q''$ be the smallest root of $\sum_{i=1}^{k-1}\min\set*{\frac{\alpha_i}{q''+\alpha_i},\frac{1}{2}}+\min\set*{\frac{\alpha_{k+1}}{q''+\alpha_{k+1}},\frac{1}{2}} = 1$.
Therefore $q''\le v_k$.
Thus, $x_{k+1}=\min\set*{\frac{\alpha_{k+1}}{v_k+\alpha_{k+1}},\frac{1}{2}} > 1-\sum_{i=1}^k\min\set*{\frac{\alpha_i}{v_k+\alpha_i},\frac{1}{2}}$, 
The inequality holds because $q' > v_k$ in Case 2.
Therefore, $x_{k+1}$ is strictly increasing at $v_{k+1} = v_k$.

\textit{Case 2-2:} $\sum_{i=1}^{k-1}\min\set*{\frac{\alpha_i}{v_k+\alpha_i},\frac{1}{2}}+\min\set*{\frac{\alpha_{k+1}}{v_k+\alpha_{k+1}},\frac{1}{2}}> 1$, the division point is $k-1$.
Let $q'''$ be the smallest root of $\sum_{i=1}^{k-1}\min\set*{\frac{\alpha_i}{q'''+\alpha_i},\frac{1}{2}} = 1$.
Since $\sum_{i=1}^{k-1}\min\set*{\frac{\alpha_i}{v_k+\alpha_i},\frac{1}{2}} \le \sum_{i=1}^k\min\set*{\frac{\alpha_i}{v_k+\alpha_i},\frac{1}{2}} \le 1$ due to the definition of $k$, we have $q'''\le v_k$.
Thus $x_{k+1}=1-\sum_{i=1}^{k-1}\min\set*{\frac{\alpha_i}{v_k+\alpha_i},\frac{1}{2}}\ge 1-\sum_{i=1}^k\min\set*{\frac{\alpha_i}{v_k+\alpha_i},\frac{1}{2}}$, 
$x_{k+1}$ is non-decreasing at $v_{k+1} = v_k$.

\subsection{Proof of \cref{sol:approx-lw:budget}}\label{proof:approx-lw:budget}

\noindent{\bf Case 1:} $q > v_{k+1}$. In this case, only bidder $i \in[k]$ receives the item. 
Thus, it is enough for us to check the payment for $i \in [k]$. 
For bidders $i \in [k]$, $x_i = \min\set*{\frac{\alpha_i}{q + \alpha_i}, \frac{1}{2}} \le \frac{\alpha_i}{q + \alpha_i}$, therefore $q x_i \le \alpha_i (1 - x_i)$.
Since the allocation for bidder $i$ does not change with $v_i$ as long as $v_i > q$, and by \cref{lem:myerson}, we have $p_i \le q x_i \le \alpha_i (1 - x_i) = B_i$.

\noindent{\bf Case 2:} $q \le v_{k+1}$.
Only bidders $i = 1, 2, \cdots, k+1$ can receive the item. 
For bidders $i=1,2,\cdots, k$, using the same argument as in Case 1, we have $p_i\le v_{k+1} x_i \le \alpha_i(1-x_i)=B_i$.
Now we consider $x_{k+1}$. 
According to \eqref{eq:approx-lw:xkplus1}, we have $x_{k+1} \le \frac{\alpha_{k+1}}{v_{k+1} + \alpha_{k+1}}$, implying $v_{k+1} x_{k+1} \le \alpha_{k+1} (1 - x_{k+1})$. 
Thus, along with \cref{lem:myerson}, $p_{k+1} \le v_{k+1} x_{k+1} \le \alpha_{k+1} (1 - x_{k+1}) = B_{k+1}$.

\section{Myerson's Lemma}\label{sec:myerson}

In this section, we present Myerson's lemma, which characterizes the allocation and payment of a truthful and individually rational auction mechanism.

\begin{lemma}[Myerson's Lemma~\cite{myerson1981optimal}]\label{lem:myerson}
An auction mechanism is truthful and individually rational if and only if the allocation and payment functions of the mechanism satisfy the following conditions:

\begin{enumerate}[left=1em]
  \item[(1)] The allocation function $x_i(v_i,\vecv_{-i})$ is non-decreasing in $v_i$ for all $i \in [n]$ and any $\vecv_{-i}$.
  \vspace{0.5em}
  \item[(2)] The payment function is given by $p_i(v_i,\vecv_{-i}) = v_i x_i(v_i,\vecv_{-i}) - \int_{0}^{v_i} x_i(z,\vecv_{-i}) \dif z$ for all $i \in [n]$.
\end{enumerate}
\end{lemma}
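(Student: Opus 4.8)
The plan is to prove the two implications of the equivalence separately, after reducing to a single-variable problem. Fix a bidder $i$ and the competitors' reports $\vecv_{-i}$, and abbreviate $x(b) \deq x_i(b, \vecv_{-i})$ and $p(b) \deq p_i(b, \vecv_{-i})$. The characterization is the classical single-parameter one, so I would carry out the argument in the quasi-linear regime $u(b;v) = v\,x(b) - p(b)$; the budget penalty in the paper's utility only reinforces this, since a misreport that violates the budget yields utility $-\infty$ and can never be profitable, while truthful reporting is kept budget-feasible (cf.\ \cref{sol:approx-lw:budget}) and hence quasi-linear. Truthfulness then reads $u(v;v) \ge u(b;v)$ for every $v,b \ge 0$ and every fixed $\vecv_{-i}$.

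For the sufficiency direction ($\Leftarrow$), assume the allocation is monotone and the payment obeys the stated formula $p(v) = v\,x(v) - \int_0^v x(z)\,\dif z$. Substituting gives $u(v;v) = \int_0^v x(z)\,\dif z \ge 0$, which is exactly individual rationality since $x \ge 0$. For truthfulness I would rearrange the desired inequality $u(v;v) \ge u(b;v)$ into $\int_b^v x(z)\,\dif z \ge (v-b)\,x(b)$, and verify it by comparing the integrand to the constant $x(b)$: when $v > b$, monotonicity gives $x(z) \ge x(b)$ on $[b,v]$; when $v < b$, it gives $x(z) \le x(b)$ on $[v,b]$; both cases yield the claim, with equality at $v = b$.

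For the necessity direction ($\Rightarrow$), assume truthfulness and individual rationality. Monotonicity follows from the standard two-type argument: writing the incentive constraints both ways between arbitrary types $v$ and $v'$ and adding them yields $(v-v')\bigl(x(v) - x(v')\bigr) \ge 0$. For the payment formula I would work with the interim utility $U(v) \deq v\,x(v) - p(v)$ and read off from the same pair of constraints the sandwich $(v-v')\,x(v') \le U(v) - U(v') \le (v-v')\,x(v)$ for $v > v'$. Summing this over a fine partition $0 = t_0 < t_1 < \cdots < t_m = v$ bounds $U(v) - U(0)$ between the lower and upper Riemann sums of the monotone (hence Riemann-integrable) function $x$; letting the mesh tend to zero forces $U(v) = U(0) + \int_0^v x(z)\,\dif z$. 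Evaluating at $v = 0$ gives $U(0) = -p(0)$, and invoking the standard normalization $p(0,\vecv_{-i}) = 0$ (a zero-valuation bidder is neither charged nor subsidized) fixes $U(0) = 0$, recovering the exact formula.

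The main obstacle is the payment-identity step in the necessity direction: converting the pointwise sandwich inequalities into the exact integral and then pinning down the constant of integration. The sandwich-to-integral passage is clean precisely because $x$ is monotone and therefore Riemann integrable, so the partition argument converges without any measure-theoretic machinery. The delicate point is the normalization $U(0) = 0$, since truthfulness together with individual rationality alone only force $U(0) \ge 0$ (equivalently $p(0) \le 0$); it is the convention that a zero-type pays nothing that selects the displayed payment rule among all incentive-equivalent rules differing by a constant in $\vecv_{-i}$.
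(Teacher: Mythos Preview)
The paper does not prove this lemma; it is stated in the appendix with a citation to Myerson~(1981) and invoked as a black box. Your proposal is the standard single-parameter argument and is correct for the quasi-linear model the lemma is really about; your reduction of the paper's budget-penalized utility to that model is sound for the necessity direction (IR forces every report to be budget-feasible, so all incentive constraints become quasi-linear), while for sufficiency the extra budget-feasibility check is supplied separately by \cref{sol:approx-lw:budget} rather than by the lemma itself. Your flag on the normalization $p_i(0,\vecv_{-i})=0$ is also apt: without it the necessity direction only determines payments up to an additive function of $\vecv_{-i}$, so the exact formula as stated implicitly assumes this convention.
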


\end{document}